\def\a{\alpha}
\def\b{\beta}
\newcommand{\nba}{{\sc nba}}
 \newtheorem{thm}{Theorem}[section]
 \newtheorem{lemma}[thm]{Lemma}
\newtheorem{cor}[thm]{Corollary}
\newtheorem{claim}{Claim}[section]
\newcommand{\ufp}{{\sc ufp}}
\newcommand{\ssufp}{Single-Sink {\sc max-ufp}}
\newcommand{\qed}{$\hfill{\Box}$}
\newcommand{\medp}{{\sc MEDP}}
\newcommand{\congestionresults}[1]{}
\newcommand{\capp}[1]{c_{#1}}
\begin{document}
\bibliographystyle{plain}


\title{The Inapproximability of Maximum Single-Sink\\ Unsplittable, Priority and Confluent Flow Problems}
\author{F. Bruce Shepherd \and Adrian Vetta}

\maketitle

\begin{abstract}
We consider the  single-sink network flow problem.
An instance consists of a capacitated graph (directed or undirected),
a sink node $t$ and a set of demands that we want to send to the
sink. Here demand $i$ is located at a node $s_i$ and
requests an amount $d_i$ of flow capacity in order to route successfully.
Two standard objectives are to maximise
(i) the number of demands (cardinality) and (ii) the total demand (throughput)
that can be routed subject to the capacity constraints. Furthermore, we examine these maximisation
problems for three specialised types of network flow: unsplittable, confluent and priority flows.

In the {\em unsplittable flow} problem, we have edge capacities, and the demand for $s_i$ must be
routed on a single path. In the {\em confluent flow} problem, we have node capacities, and the final
flow must induce a tree.
Both of these problems have been studied extensively, primarily in the
single-sink setting. However, most of this work imposed the {\em no-bottleneck assumption}
(that the maximum demand $d_{max}$ is at most the minimum capacity $u_{min}$).
Given the no-bottleneck assumption, there is a factor $4.43$-approximation algorithm due to Dinitz et al.~\cite{Dinitz99}
for the unsplittable flow problem. 
Under the even stronger assumption of uniform capacities,
there is a factor $3$-approximation algorithm due to Chen et al.~\cite{Chen07} for the confluent flow problem.
However, unlike the unsplittable flow problem, a constant factor approximation algorithm cannot
be obtained for the single-sink confluent flow problem even {\bf with} the no-bottleneck assumption.
Specifically, we prove that it is hard in that setting to approximate single-sink confluent flow to
within $O(\log^{1-\epsilon}(n))$, for any $\epsilon>0$.
This result applies for both cardinality and throughput objectives even in undirected graphs.

The remainder of our results focus upon the setting {\bf without} the no-bottleneck assumption.
There, the only result we are aware of is an $\Omega(m^{1-\epsilon})$ inapproximability
result of Azar and Regev~\cite{azar2001strongly} for cardinality single-sink unsplittable flow in directed graphs.
We prove this lower bound applies to undirected graphs, including planar networks.
This is the first super-constant hardness known for undirected single-sink unsplittable flow,
and apparently the first polynomial hardness for undirected unsplittable flow even for general (non-single sink)
multiflows.
We show the lower bound also applies to the cardinality single-sink confluent flow
problem.

Furthermore, the proof of Azar and Regev requires exponentially large demands.
We show that polynomial hardness continues to hold without this restriction, even if all demands
and capacities lie within an arbitrarily small range $[1,1+\Delta]$, for $\Delta > 0$. This lower bound applies
also to the throughput objective.
This result is very sharp since if $\Delta=0$,  then we have an instance of the single-sink maximum
edge-disjoint paths problem which can be solved exactly via a maximum flow algorithm.
This motivates us to study an intermediary problem, {\em priority flows}, that models
the transition as $\Delta\rightarrow 0$. Here we have unit-demands, each with a priority level.
In addition, each edge has a priority level and a routing path for a
demand is then restricted to use edges with at least the same priority level.
Our results imply a polynomial lower bound for the maximum priority flow problem, even for
the case of uniform capacities.

Finally, we present greedy algorithms that provide upper bounds which (nearly) match
the lower bounds for unsplittable and priority flows. These upper bounds also apply for
general multiflows.
\end{abstract}

\section{Introduction}
In this paper we improve known lower bounds (and upper bounds) on the approximability
of the maximization versions of the {\em single-sink unsplittable flow}, {\em single-sink priority flow}
and {\em single-sink confluent flow} problems.
In the single-sink network flow problem, we are given
a directed or undirected graph $G=(V,E)$ with $n$ nodes and $m$ edges that has edge capacities $u(e)$ or
node capacities $u(v)$. There are a collection of demands
that have to be routed to a unique destination {\em sink} node $t$.
Each demand $i$ is located at a {\em source} node $s_i$ (multiple demands could share the same source)
and requests an amount $d_i$ of flow capacity in order to route.
We will primarily focus on the following two well-known versions of the single-sink network flow problem:
\begin{itemize}
\item {\tt Unsplittable Flow}: Each demand $i$ must be sent along a unique path $P_i$ from
$s_i$ to $t_i$.
\item {\tt Confluent Flow}: Any two demands that meet at a node must then traverse identical paths
to the sink.  In particular, at most one edge out of each node $v$ is allowed to carry flow.
Consequently, the support of the flow is a tree in the
undirected graphs, and an arborescence rooted at $t$ in directed graphs.
\end{itemize}
Confluent flows were introduced to study the effects of next-hop routing \cite{Chen05}.
In that application, routers are capacitated and, consequently, nodes
in the confluent flow problem are assumed to have capacities but not edges.
In contrast, in the unsplittable flow problem it is the edges that are assumed to
be capacitated. We follow these conventions in this paper.
In addition, we will also examine a third network flow problem called {\tt Priority Flow} (defined in Section \ref{sec:results}).
In the literature, subject to network capacities, there are two standard maximization objectives:


\begin{itemize}
\item {\tt Cardinality}: Maximize the total number of demands routed.
\item {\tt Throughput}: Maximize satisfied demand, that is, the total flow carried by the routed demands.
\end{itemize}
These objectives can be viewed as special cases of the
{\em profit-maximisation} flow problem. There each demand $i$ has
a profit $\pi_i$ in addition to its demand $d_i$. The goal is to route a subset
of the demands of maximum total profit. The cardinality model then corresponds to the
unit-profit case, $w_i=1$ for every demand $i$; the throughput model is the case $\pi_i=d_i$.
Clearly the lower bounds we will present also apply to the more general
profit-maximisation problem.



\subsection{Previous Work}\label{sec:previous}
The unsplittable flow problem has been extensively studied since its
introduction by Cosares and Saniee \cite{cosares1994optimization} and Kleinberg~\cite{Kleinberg96}. However, most positive results
have relied upon the {\em no-bottleneck assumption} (\nba)
where the maximum demand is at most the minimum
capacity, that is, $d_{max} \leq u_{min}$.
Given the no-bottleneck assumption, the best known result is a factor $4.43$-approximation
algorithm due to Dinitz, Garg and Goemans \cite{Dinitz99} for the maximum throughput objective.


%

The confluent flow problem was first examined by Chen, Rajaraman and Sundaram~\cite{Chen05}.
There, and in variants of the problem \cite{Chen07, donovan2007degree, shepherd2009single},  the focus
was on  uncapacitated graphs.\footnote{An exception concerns the analysis of
graphs with constant treewidth \cite{dressler2010capacitated}.}
The current best result for maximum confluent flow is a factor $3$-approximation
algorithm for maximum  throughput in uncapacitated networks \cite{Chen07}.

Observe that uncapacitated networks  (i.e. graphs with uniform capacities) trivially also satisfy
the no-bottleneck assumption. Much less is known about networks where the
no-bottleneck assumption does {\bf not} hold. This
is  reflected by the dearth of progress for the case of multiflows (that is, multiple sink)
without the \nba.
It is known that a constant factor approximation algorithm exists for the case in which
$G$ is a path \cite{bonsma2011constant}, and that a poly-logarithmic approximation algorithm exists for the
case in which $G$ is a tree \cite{chekuri2009unsplittable}.
The extreme difficulty of the unsplittable flow problem is suggested by the following
result of Azar and Regev~\cite{azar2001strongly}.
\begin{thm}[\cite{azar2001strongly}]\label{thm:ar}
If $P \neq NP$ then, for any $\epsilon > 0$, there is no $O(m^{1-\epsilon})$-approximation algorithm for the cardinality objective
of the single-sink unsplittable flow problem in directed graphs.
\end{thm}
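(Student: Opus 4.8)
The plan is to exhibit a polynomial-time reduction from an NP-complete problem (\textsc{3-Sat} is convenient, though another NP-complete routing problem such as directed integral $2$-commodity flow works equally well) that is \emph{gap-producing} for single-sink \ufp\ in directed graphs: \textsc{yes}-instances are mapped to flow instances admitting an integral routing of $k$ demands, while \textsc{no}-instances are mapped to flow instances in which no more than $k/m^{1-\e}$ demands can ever be routed simultaneously. Any $O(m^{1-\e})$-approximation would then distinguish the two cases, which is impossible unless $P=NP$. Note that a construction of this kind is essentially forced to use exponentially large demands: absent them one lands, up to constants, in a no-bottleneck-type regime where constant-factor approximations are available, so the large-number arithmetic is not incidental but is precisely the source of the hardness.

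The basic device is a ``bit gadget'' that encodes a Boolean choice by knapsack rigidity. Inside a gadget every edge receives the same capacity $C$, equal to the size of a single designated ``heavy'' demand; since that demand is unsplittable it is forced onto one of exactly two path types, which we read as setting a variable true or false. The residual capacity left on the two path types is then loaded with ``light'' demands whose sizes are distinct powers of two, so that a subset of light demands fits through a gadget if and only if their path choices are consistent with the heavy demand's choice. Chaining many gadgets through a shared bottleneck edge propagates one global assignment, and a clause of the formula is modelled by a small sub-gadget that a light demand can traverse only if one of its literals has been set so as to satisfy the clause. The single-sink restriction is honoured by funnelling every path into $t$ through this shared spine. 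Thus in the \textsc{yes} case a satisfying assignment routes all demands, whereas in the \textsc{no} case routing even a handful of demands would expose an assignment satisfying the formula.

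To convert the resulting gap — which from the decision hardness alone is essentially unbounded, but becomes a clean polynomial ratio only after size bookkeeping — into the stated $m^{1-\e}$ bound, I would use a recursive composition. At nesting depth $\ell$ the ``good'' behaviour carries on the order of $k^{\ell}$ demands, while each level multiplies the edge count by only a fixed polynomial factor, so $m = k^{O(\ell)}$ and the ratio is $m^{1-o(1)}$ as $k,\ell\to\infty$; fixing $\ell$ (equivalently, the base-instance size) as a function of $\e$ and reading off the final $m$ then gives an $m^{1-\e}$ gap for every $\e>0$, at the price of a reduction whose running time is a polynomial of degree growing with $1/\e$. Alternatively the same effect is obtained by self-improvement, i.e.\ by taking the $\ell$-th ``product'' of a single gap-$2$ hard instance.

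The crux is soundness. That the \textsc{yes} instance routes many demands is immediate from a satisfying assignment; the real work is to show that in the \textsc{no} instance \emph{no} adversarial combination of heavy- and light-path choices — each of which individually respects all capacities — can route more than $m^{\e}$ demands. This requires a structural decoding lemma: any feasible routing that carries ``too many'' demands can be read off as a solution of the underlying \textsc{3-Sat} instance (or, at each recursive level, of the instance one level below), and it is exactly the rigidity coming from power-of-two demand sizes that prevents a routed demand from hedging across several mutually inconsistent assignments. Making this decoding survive the recursion — so that the multiplicative gap is genuinely preserved at every level rather than degrading by a constant factor per level and washing out — is the technical heart of the argument.
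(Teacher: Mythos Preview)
Your proposal identifies the right high-level ingredients --- exponential demands and a gap reduction from an NP-hard source --- but the route you sketch is far more elaborate than what is needed, and its crucial steps (the single-sink \textsc{3-Sat} encoding, and the claim that the gap survives recursive composition) are left as assertions rather than arguments. As you yourself note, the ``decoding lemma'' and the preservation of the gap through recursion are the technical heart; you have not supplied either.

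The actual Azar--Regev construction, which the paper recounts in Section~\ref{sec:expo-demands}, avoids all of this. It reduces not from \textsc{3-Sat} but directly from the directed $2$-disjoint-paths problem of Fortune, Hopcroft and Wyllie. The instance is simply a chain of $\ell$ supernodes $z^1,\ldots,z^\ell=t$, with two parallel arcs between consecutive supernodes carrying capacities $2^j$ and $2^j-1$, and a demand of size $2^{j+1}$ originating at each $z^j$. Each $z^j$ is replaced by a copy $H^j$ of a fixed $2$-disjoint-paths instance $H$, with all internal arcs set to capacity $2^{j+1}$; the high- and low-capacity outgoing arcs attach to $y_2$ and $y_1$, both incoming arcs attach to $x_1$, and the local demand starts at $x_2$.

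If $H$ is a \textsc{yes}-instance, every demand routes: demand $j$ takes the $x_2$--$y_2$ path inside $H^j$, crosses the high-capacity arc, and thereafter follows $x_1$--$y_1$ paths and low-capacity arcs (the accumulated load on the low arc out of $H^j$ is $\sum_{i\le j}2^i=2^{j+1}-1$, which fits). If $H$ is a \textsc{no}-instance, at most one demand routes: the largest routed demand must traverse some $x_2$--$y_2$ path in its gadget and saturate the high-capacity outgoing arc; any smaller demand passing through would then need an $x_1$--$y_1$ path disjoint from it, which does not exist. This is a two-line capacity argument, not a decoding lemma.

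So the gap is $\ell$ versus $1$ directly, with $m=\Theta(\ell\cdot|E(H)|)$. Taking $\ell=|V(H)|^{\lceil 1/\epsilon\rceil}$ gives $\ell=\Omega(m^{1-\epsilon'})$ immediately --- no bit gadgets, no clause widgets, no recursion, no self-improvement. Your intuition that exponential demands are essential is exactly right; what you are missing is that a single chain of doubling demands, threaded through copies of one hard $2$-path instance, already produces the full polynomial gap in one shot.
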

This is the first (and only) super-constant lower bound for the maximum single-sink unsplittable flow problem.



\subsection{Our Results}\label{sec:results}
The main focus of this paper is on single-sink flow problems where the no-bottleneck assumption
does not hold. It turns out that the hardness of approximation bounds are quite severe
even in the (often more tractable) single-sink setting.
In some cases they  match the worst case bounds for PIPs (general packing integer programs).
In particular, we strengthen Theorem~\ref{thm:ar} in four ways.
First, as noted by Azar and Regev, the proof of their result relies critically
on having directed graphs.
We prove it holds for undirected graphs, even {\em planar} undirected graphs.
Second, we show the result also applies to the confluent
flow problem.
\begin{thm}
\label{thm:extended}
If $P \neq NP$ then, for any $\epsilon > 0$, there is no $O(m^{1-\epsilon})$-approximation algorithm
for the cardinality objective of the single-sink unsplittable and confluent flow problems in undirected graphs.
Moreover for unsplittable flows, the lower bound holds even when we restrict to planar inputs.
\end{thm}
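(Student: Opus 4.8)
The plan is to start from the Azar--Regev reduction of Theorem~\ref{thm:ar} and modify it in three steps: (i) replace the directed host graph by an undirected one without destroying the gap; (ii) arrange the host graph to be planar for the unsplittable case; and (iii) upgrade the construction from edge-capacitated unsplittable flow to node-capacitated confluent flow. Recall that the Azar--Regev instance is (essentially) a layered graph: the sink $t$ sits at the bottom, sources sit at the top, arcs run only between consecutive layers, and the combinatorial content of the underlying hard problem is encoded by large demand values together with bottleneck arc capacities, amplified by a recursive substitution so that YES instances route $\Omega(N)$ demands while NO instances route only $N^{\epsilon}$. Every one of our modifications must keep the number of edges $m$ polynomial in the original $m$, since otherwise the $m^{1-\epsilon}$ gap degrades.

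For step (i), let $G$ be the undirected graph obtained by forgetting orientations (possibly after inserting small direction-enforcing gadgets at the nodes). The heart of the argument is a \emph{no-backtracking lemma}: in any feasible unsplittable routing in $G$, every routed demand uses a path that is monotone from its layer down to $t$, so a feasible routing in $G$ is also feasible in the original directed instance and conversely. I would prove this by a cut/flow-conservation argument at each layer boundary --- padding the instance with dummy unit demands if necessary so that the relevant cuts are tight, whence any unit of flow sent ``upward'' across a boundary would force a compensating ``downward'' unit in excess of capacity --- together with the observation that, as far as the \emph{optimum value} is concerned, a non-monotone routed path can always be shortcut without decreasing the number of routed demands. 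Making this clean, and in particular checking that the bottleneck gadgets still behave as intended once orientations are dropped, is the main obstacle.

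For step (ii), I would not planarize a general construction (a crossing gadget would blow $m$ up roughly quadratically, turning an $m^{1-\epsilon}$ gap into an $m^{1/2-\epsilon}$ one), but instead re-run the whole pipeline planarly: take \emph{planar} $3$-SAT as the source, realize the variable, clause and bottleneck gadgets with planar layouts, and perform the recursive amplification by \emph{substitution} --- replacing each edge of one instance by a disk-embedded copy of the next --- which preserves planarity at every level. One then checks that the gap and the edge count evolve exactly as in the original recursion. Since confluent flow is not claimed to be planar-hard, this step is needed only for the unsplittable statement.

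For step (iii), I would convert the edge-capacitated undirected unsplittable instance $G$ into a node-capacitated one by the standard node-splitting trick: subdivide each edge $e$ by a new degree-$2$ node of capacity $u(e)$, giving every original node a large capacity. A degree-$2$ node imposes no extra constraint on confluent flow, so an unsplittable routing in $G$ is exactly a routing obeying node capacities; it remains to argue that on these instances an optimal such routing can be made \emph{confluent}. This I would obtain from the structure of the construction: at each branching (original) node the capacities of the incident subdivision nodes, together with the no-backtracking lemma, already force all flow through it in a near-optimal solution to leave on a single edge, so an optimal routing uncrosses into a tree with no loss. If this ``for free'' argument is too lossy, the fallback is to bake confluence into the gadgets directly --- designing each branching node so that routing two demand-paths through it along different out-edges overloads it --- which only strengthens, and does not weaken, the hard instances.
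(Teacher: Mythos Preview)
Your three-step programme is not the paper's approach and, more importantly, step (i) has a genuine gap. The Azar--Regev host is not really a layered graph with sources at the top; it is a \emph{path} $z^1,\dots,z^\ell$ with two parallel edges between consecutive $z^{j-1},z^j$ (capacities $2^j$ and $2^j-1$) and a demand of size $2^{j+1}$ originating at each $z^j$. Each node $z^j$ is then replaced by a $2$-disjoint-paths instance $H^j$, with $x_2$ becoming the source, $y_2$ feeding the high-capacity edge, $y_1$ feeding the low-capacity edge, and $x_1$ receiving both incoming edges. The hardness in the directed case comes from the NP-hardness of \emph{uncapacitated} directed $2$-disjoint paths inside each $H^j$; there is no layer-cut structure you can make tight to forbid backtracking once you drop orientations, and padding with dummy demands to tighten cuts changes the cardinality instance you are trying to approximate. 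So the ``no-backtracking lemma'' is not the right lever.

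The paper's move is much more direct and you are missing it: the \emph{capacitated} $2$-disjoint-paths problem --- where edges have capacity $\alpha$ or $\beta>\alpha$ and the $\{x_2,y_2\}$ path may only use $\beta$-edges --- is NP-hard in \emph{undirected} graphs, both for edge-disjoint paths (even planar, terminals interleaved on the outer face) and for node-disjoint paths. So for the undirected unsplittable case one simply plugs an undirected capacitated edge-disjoint $2$-paths gadget into each $H^j$ with $\beta=2^{j+1}$ and $\alpha=2^{j+1}-1$; the demand $2^{j+1}$ at $x_2$ is too large for any $\alpha$-edge, so in a NO-instance no $\{x_1,y_1\}$ path can avoid it, and at most one demand routes exactly as before. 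Planarity of the whole instance is then automatic, since the Azar--Regev path is planar and the gadget is planar with its four terminals on the boundary in the right cyclic order --- no planar $3$-SAT or recursive substitution is needed. For confluent flow one instead uses the undirected capacitated \emph{node}-disjoint $2$-paths gadget (with node capacities), and the YES-instance canonical routing is already confluent because the two paths through each $H^j$ are node-disjoint, so all routed paths meet only at $t$; no separate ``uncrossing into a tree'' argument is required.
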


Third, Theorems \ref{thm:ar} and \ref{thm:extended} rely
upon the use of exponentially large demands -- we call this the {\em large demand regime}. A second
demand scenario that has received attention in the literature is the {\em polynomial demand regime} -- this regime
is studied in \cite{guruswami2003near}, basically to the exclusion of the large demand regime.
We show that strong hardness results apply in the polynomial demand regime; in fact,
they apply to the {\em small demand regime} where the {\em demand spread}
 $\frac{d_{max}}{d_{min}} = 1+\Delta$, for some ``small'' $\Delta > 0$.
(Note that $d_{min} \leq u_{min}$ and so the demand spread of an instance is at least
the {\em bottleneck value} $\frac{d_{max}}{u_{min}}$.)
Fourth, by considering the case where $\Delta  > 0$ is arbitrarily small we obtain
similar hardness results for the throughput objective for the single-sink unsplittable
and confluent flow problems.
Formally, we show the following $m^{\frac12 - \epsilon}$-inapproximability result. We
note however that the hard instances have a linear number of edges (so one may prefer to
call this an $n^{\frac12-\epsilon}$-inapproximability result).

\begin{thm}
\label{thm:hard} Neither cardinality nor throughput can be approximated to
within a factor of $O(m^{\frac12-\epsilon})$, for any
$\epsilon > 0$, in the single-sink unsplittable and confluent flow problems.
This holds for  undirected and directed graphs even
when instances are restricted to have demand spread
$\frac{d_{max}}{d_{min}}=1 + \Delta$, where $\Delta  > 0$ is arbitrarily small.
\end{thm}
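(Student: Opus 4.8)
The plan is to reduce from the NP-hard combinatorial problem underlying Theorem~\ref{thm:extended} (ultimately a satisfiability-type problem), but to realise the ``all-or-nothing'' capacity interactions with \emph{near-uniform} demands rather than with the exponentially spaced demands used there. After normalising $d_{min}=1$, the basic device is a bottleneck edge of integer capacity: with demand values drawn from $\{1,1+\Delta\}$, an edge of capacity $2$ carries two demands of value $1$, but it cannot carry a demand of value $1+\Delta$ together with any other demand, since $1+(1+\Delta)>2$ and $2(1+\Delta)>2$; more generally an edge of capacity $k$ holds $k$ unit demands but only $\lfloor k/(1+\Delta)\rfloor=k-\Theta(k\Delta)$ of the slightly larger ones. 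Assigning the value $1+\Delta$ to the demands we want to ``inflate'' therefore lets a bottleneck edge act as a choice gadget that encodes exactly the kind of packing conflict which powers of two encode in the Azar--Regev construction. Crucially this is impossible at $\Delta=0$: there every such instance is a single-sink edge-disjoint-paths instance, solved exactly by max flow, so the phenomenon is genuinely a ``$\Delta\to 0^{+}$'' effect and the resulting bound is sharp. The target exponent is $\frac12$ rather than $1$ because (as the final section shows) a greedy $\tilde{O}(\sqrt{m})$ upper bound holds for these instances, so $m^{\frac12-\epsilon}$ is essentially the best one can hope for.

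Concretely I would (i) build a routing skeleton in the style of the product/recursive constructions used for disjoint-path hardness: a graph with sink $t$, $N$ source demands, and a nested family of bottleneck edges whose capacities encode the clause/variable structure, so that a satisfying assignment routes all (or a $1-o(1)$ fraction) of the $N$ demands while any assignment violating a constant fraction of clauses leaves only $N^{\frac12+o(1)}$ of them routable; (ii) tune the integer capacity of each bottleneck so that, under demand values in $\{1,1+\Delta\}$, pushing a demand marked ``large'' through it forfeits a slot that a consistent solution would otherwise keep; (iii) attach to each source a small \emph{bundling funnel}, a short sequence of merges through edges whose capacity equals the bundle size, which forces the members of one bundle to traverse a common node sequence before entering the skeleton, so that a bundle behaves as a single indivisible flow. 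The whole instance then has $m=\mathrm{poly}(N)$ edges and a YES/NO gap of $N^{\frac12-o(1)}=m^{\frac12-\epsilon}$. Because each funnel pins its bundle to one node sequence, the resulting flow support is also a valid arborescence/tree, so --- after the standard edge-capacity-to-node-capacity translation --- the same family witnesses the lower bound for confluent flow; since every demand lies in $[1,1+\Delta]$, routing $k$ demands has throughput in $[k,k(1+\Delta)]$, so an $\alpha$-approximation for throughput is an $\alpha(1+\Delta)$-approximation for cardinality and the bound transfers to throughput; and the undirected and directed versions follow by the same orientation/bidirection arguments already used for Theorem~\ref{thm:extended}.

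The main obstacle is steps (ii)--(iii) together: bounding the demand spread by $1+\Delta$ forbids the geometric demand scales that make the exponential-demand amplification self-similar, and a literal emulation --- replacing a demand of value $D$ by $D$ unit demands --- blows the edge count up to $2^{\mathrm{poly}}$ and proves nothing. The amplification must instead be re-engineered so that each recursion level multiplies the gap by a constant while multiplying $m$ by only a polynomial factor and never widening the demand window; I expect exactly this constraint to cost a factor of $2$ in the exponent, which would explain the drop from the $m^{1-\epsilon}$ of Theorem~\ref{thm:extended} to $m^{\frac12-\epsilon}$ here. A secondary point is soundness of the funnels: one must check that a bundle cannot ``tunnel'' through a gadget meant to block it, and, in the confluent case, that the funnels together with the skeleton never force a node to send flow on two out-edges; both are arranged by making every funnel edge a strict bottleneck for precisely its own bundle.
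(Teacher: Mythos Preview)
Your plan does not match the paper's construction, and as written it is not a proof: steps (i)--(iii) are a description of desiderata, not a construction, and the central difficulty you flag (``re-engineer the amplification so each level multiplies the gap by a constant while multiplying $m$ by only a polynomial factor'') is exactly the problem the theorem is asking you to solve. Nothing in your proposal says how to do it; ``nested bottleneck edges encoding clause/variable structure'' and ``bundling funnels'' are names for gadgets, not gadgets.

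The paper's argument is quite different and much more concrete. It does \emph{not} reduce from SAT or re-engineer the Azar--Regev recursion. Instead it reduces from the NP-hard \emph{capacitated $2$-disjoint paths} problem (edge-disjoint for unsplittable, node-disjoint for confluent). The skeleton is an $N\times N$ half-grid $G_N$: the demand at $s_i$ has value $c_i=1+i\delta$, and every edge in row $i$ and column $i$ has capacity exactly $c_i$. The graduated capacities force any feasible path for demand $i$ to traverse all of row $i$, so any two routed demands must \emph{cross} at a grid node (Claim~\ref{cl:cross}). One then replaces each degree-$4$ grid node by a copy of a single $2$-disjoint-paths instance $H$, with $\alpha,\beta$ set to the two incident capacities. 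If $H$ is a YES-instance, the canonical hooked paths route all $N$ demands; if $H$ is a NO-instance, no two paths can cross, so at most one demand routes. That is a gap of $N$ versus $1+\Delta<2$ on a graph with $m=\Theta(|V(H)|\cdot N^2)$ edges; choosing $N$ polynomially large in $|V(H)|$ yields the $m^{1/2-\epsilon}$ bound. The $\frac12$ in the exponent is not a loss from a weakened recursion; it is simply the cost of using a two-dimensional grid ($m\approx N^2$) to force an all-pairs crossing structure. Your capacity-$2$ ``choice'' edge is not needed: the mechanism that blocks two demands on one row is that demand $j$ already saturates row $j$, so demand $i$ cannot share any row-$j$ edge at all.
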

Again for the unsplittable flow problem this hardness result applies even in planar graphs.
Theorems \ref{thm:extended} and \ref{thm:hard} are the first super-constant hardness for any undirected
version of the single-sink unsplittable flow problem, and any directed version with small-demands.
We also remark that the extension to the small-demand regime is significant as suggested by the sharpness of the result.
Specifically, suppose $\Delta=0$ and, thus, the demand spread is one. We may then scale to assume that $d_{max}=d_{min}=1$.
Furthermore, we may then round down all capacities to the nearest integer as any fractional capacity cannot be used.
But then the single-sink unsplittable flow problem can be solved easily in polynomial time by a max-flow algorithm!

To clarify what is happening in the change from $\Delta>0$ to $\Delta=0$, we introduce
and examine an intermediary problem, the {\em maximum priority flow problem}.
Here, we have a graph $G=(V,E)$ with a sink node $t$, and demands from nodes $s_i$ to $t$.
These demand are unit-demands, and thus $\Delta=0$. However, a demand may not traverse every edge.
Specifically, we have a partition of $E$ into priority classes $E_i$. Each demand also has a {\em priority}, and
a demand of priority $i$ may only use edges of priority $i$ or better (i.e., edges in $E_1 \cup E_2 \cup \ldots E_i$).
The goal is to find a maximum routable subset of the demands. Observe that, for this unit-demand problem, the
throughput and cardinality objectives are identical.
Whilst various priority network design problems have been considered in the literature (cf. \cite{charikar2004resource,chuzhoy2008approximability}),
we are not aware of existing results on maximum priority flow. Our results immediately imply the following.
\begin{cor}
\label{cor:priority}
The single-sink maximum priority flow problem
cannot be approximated to within a factor of $m^{\frac12-\epsilon}$, for any $\epsilon > 0$, in planar
directed or undirected graphs.
\end{cor}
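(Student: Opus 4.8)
The plan is to obtain Corollary~\ref{cor:priority} as a direct reinterpretation of the hard instances built for Theorem~\ref{thm:hard}, exploiting the fact (implicit in the phrase ``models the transition as $\Delta\to 0$'') that those instances degenerate into priority flow instances.

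First I would revisit the family of hard single-sink unsplittable flow instances underlying Theorem~\ref{thm:hard}. For each target $\epsilon>0$ these instances use demand values drawn from a set of clusters $D_1,\dots,D_k$ lying inside $[1,1+\Delta]$, with $\Delta>0$ as small as we wish, together with a matching set of edge capacities. The point to extract is that, because $\Delta$ can be made arbitrarily small relative to $k$ and the capacities, the feasibility predicate ``a demand in cluster $D_j$ can be routed along edge $e$ as part of some feasible unsplittable routing'' no longer depends on $\Delta$ at all: on each edge the only things that matter in the limit are the number of demands crossing $e$ and the largest cluster index among them, and the set of clusters that $e$ can ever carry is a down-set $\{D_1,\dots,D_{j^*(e)}\}$. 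In other words, the limiting instance is governed by a priority constraint.

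Second I would write down the corresponding priority flow instance on the same graph $G$ (so $m$ is unchanged, which is what keeps the exponent at $\frac12-\epsilon$). Replace every demand in cluster $D_j$ by a unit demand of priority $k+1-j$, so that the largest, most constrained demands receive the most exclusive priority class; put edge $e$ into priority class $E_{k+1-j^*(e)}$ and give it integer capacity equal to the number of demands it can hold in the limit. A check — this is where the $\Delta\to 0$ bookkeeping of the previous step is used — shows that a subset of demands is routable in the priority instance if and only if the corresponding subset is routable in the small-$\Delta$ unsplittable instance. Hence the priority flow instance inherits the yes/no gap of Theorem~\ref{thm:hard}, giving $m^{\frac12-\epsilon}$-inapproximability; it is planar (respectively directed) because the unsplittable hard instance is, and since priority flow is a unit-demand problem the cardinality and throughput statements coincide. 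Running the same argument on the directed hard instances gives the directed case. (With a little more care one keeps the capacities polynomially — indeed constant — bounded, so that simulating each edge by parallel unit-capacity edges yields the uniform-capacity refinement mentioned in the abstract at negligible cost in the exponent.)

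The main obstacle is the equivalence claim in the second step: showing that the $\Delta\to 0$ limit of the unsplittable hard instance is \emph{exactly} the priority instance above — that no feasible routing of the small-demand instance exploits the $O(\Delta)$ slack on an edge in a way the priority model cannot reproduce, and conversely that the priority model does not accidentally permit a packing the original forbade. This is precisely the structural property one must build into (or read off from) the construction of Theorem~\ref{thm:hard}; once that construction is in hand, Corollary~\ref{cor:priority} follows with essentially no further work, which is why it is stated as a corollary.
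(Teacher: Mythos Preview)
Your approach is correct and lands on essentially the same construction as the paper, but the paper's presentation is more direct. Rather than framing the priority instance as a $\Delta\to 0$ limit and then worrying about whether the limit is exact, the paper simply takes the half-grid construction of Theorem~\ref{thm:hard}, replaces each capacity $c_i=1+i\delta$ by a priority level (larger capacity $\leftrightarrow$ higher priority), sets all demands and capacities to $1$, and re-verifies Claim~\ref{cl:cross} directly in the priority setting: demand $i$ still cannot enter columns $1,\dots,i-1$ (now because of priority rather than capacity), so any feasible $Q_i$ still contains all of row $i$, and two routed demands must cross exactly as before. The gadget argument (Lemma~\ref{lem:yes}) then goes through verbatim.

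Your abstract equivalence worry dissolves once you notice the single structural fact the paper exploits implicitly: in the hard instance every capacity lies in $[1,1+\Delta]$ and every demand is at least $1$, so each edge carries at most one demand in any feasible unsplittable routing. Hence the only live constraint on an edge is ``does this one demand fit?'', i.e.\ $c_i\le c_j$, which is precisely the priority predicate. So the ``number of demands crossing $e$'' and ``largest cluster index'' bookkeeping you set up is unnecessary --- the capacity you would assign is always $1$ --- and the bijection between feasible routings is immediate rather than something one must carefully extract. Your route is sound; the paper's is just shorter.
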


The extension of the hardness results for single-sink unsplittable flow to undirected graphs is also significant since
it appears to have been left unnoticed even
for  general multiflow instances.
In \cite{guruswami2003near}: ``...{\em the hardness of undirected edge-disjoint paths remains an
interesting open question. Indeed, even the hardness of
edge-capacitated unsplittable flow remains open''}\footnote{In \cite{guruswami2003near}, they do however establish
an inapproximability bound of $n^{1/2-\epsilon}$, for any $\epsilon > 0$,
on {\em node-capacitated} {\sf USF} in undirected graphs.}  Our result resolves this question by showing
polynomial hardness (even for single-sink instances). We emphasize that this is not the first
super-constant hardness for general multiflows however.
  A polylogarithmic  lower bound appeared in \cite{andrews2006logarithmic} for the maximum
  edge-disjoint paths (MEDP) problem (this was subsequently extended to the regime where edge congestion is
  allowed \cite{andrews2010inapproximability}).
Moreover,  a polynomial lower bound for MEDP seems less likely given the recent $O(1)$-congestion polylog-approximation
algorithms \cite{chuzhoy2012routing,chuzhoy2012polylogarithimic}.
In this light, our hardness results for single-sink
unsplittable flow again highlight the sharp threshold involved with the no-bottleneck assumption.
That is, if we allow some slight
variation in demands and capacities within a tight range $[1,1+\Delta]$  we immediately jump from (likely)
polylogarithmic approximations for MEDP to  (known) polynomial hardness of the corresponding maximum
unsplittable flow instances.

We next note that Theorems \ref{thm:ar} and \ref{thm:extended} are stronger than Theorem \ref{thm:hard}
in the sense that they have exponents of $1-\epsilon$ rather than $\frac12-\epsilon$. Again, this extra boost is due to their
use of exponential demand sizes. One can obtain
a more refined picture as to how the hardness of cardinality single-sink unsplittable/confluent flow varies with
 the demand sizes, or more precisely how it varies on the bottleneck
 value $\frac{d_{max}}{u_{min}}$.\footnote{This seems likely connected to  a footnote in \cite{azar2001strongly}
 that a lower bound of the
 form $O(m^{\frac12-\epsilon}\cdot \sqrt{\log (\frac{d_{max}}{u_{min}}}))$ exists for maximum unsplittable flow in
 directed graphs. Its proof was omitted however.}
Specifically, combining the approaches used in Theorems \ref{thm:extended} and \ref{thm:hard} gives:
\begin{thm}
\label{thm:harder} Consider any fixed $\epsilon > 0$ and  $d_{max}/u_{min} > 1$.
It is NP-hard to approximate cardinality single-sink unsplittable/confluent flow
to within a factor of $O(m^{\frac12-\epsilon}\cdot \sqrt{\log (\frac{d_{max}}{u_{min}})})$
in undirected or directed graphs. For unsplittable flow, this remains true for planar graphs.
\end{thm}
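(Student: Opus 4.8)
The plan is to interpolate between the two extreme constructions: the proof of Theorem~\ref{thm:extended}, which uses exponentially large demands to get the full exponent $1-\epsilon$, and the proof of Theorem~\ref{thm:hard}, which uses demands confined to $[1,1+\Delta]$ to get exponent $\frac12-\epsilon$. The natural reading of Theorem~\ref{thm:harder} is that the construction of Theorem~\ref{thm:hard} is a single ``layer'' of a gadget whose hardness amplifies as one stacks layers, and that each additional layer roughly doubles the bottleneck value $d_{max}/u_{min}$ while contributing another factor to the inapproximability. So the first step is to recall the hard instance behind Theorem~\ref{thm:hard} and identify the quantity that controls both the demand spread and the hardness exponent: if a $k$-fold stacking yields $d_{max}/u_{min} = 2^{\Theta(k)}$ (so $\sqrt{\log(d_{max}/u_{min})} = \Theta(\sqrt k)$) and an inapproximability factor of roughly $m^{\frac12-\epsilon} \cdot 2^{\Theta(k)}$ wait---one must be careful here, since the claimed bound is only $m^{\frac12-\epsilon}\sqrt{\log(d_{max}/u_{min})}$, i.e. the bonus from $k$ layers is only $\sqrt k$, not $2^k$. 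So the correct picture is that $k$ layers cost a factor $2^{\Theta(k)}$ in the bottleneck ratio but buy only an extra $\sqrt k$ in the hardness exponent; equivalently, the hardness grows like $m^{\frac12-\epsilon}$ times $\sqrt{\log(\text{bottleneck})}$.

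Concretely, I would take the reduction from a hard problem (for Theorem~\ref{thm:hard} this is presumably an instance of a $2$-prover or label-cover--type problem, or a direct reduction from independent set / the Azar--Regev base gadget) and recursively compose it with itself. At level $j$ one has a network $G_j$ with a designated sink, a set of demands, and the property that a YES instance routes all demands while a NO instance routes only a $m^{-(\frac12-\epsilon)}$ fraction; one then replaces each ``edge'' or each ``demand slot'' of a level-$(j{+}1)$ copy of the base gadget with a scaled copy of $G_j$, doubling the demand-to-capacity ratio at each level so that after $k$ levels the ratio is $2^{\Theta(k)}$. The number of edges after $k$ levels is $m_0^{O(k)}$, so $\log m = \Theta(k \log m_0)$; the accumulated gap is $(m_0^{-(\frac12-\epsilon)})^{k}$, but expressed in terms of the \emph{final} $m$ this is $m^{-(\frac12-\epsilon)}$ times a correction, and the correction together with the choice of $m_0$ as a function of $k$ is exactly what produces the extra $\sqrt{\log(d_{max}/u_{min})} = \Theta(\sqrt k)$ factor. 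The planar-graph claim follows because the base gadget of Theorem~\ref{thm:extended}/\ref{thm:hard} is planar and planarity is preserved under the substitution operation (replacing an edge by a planar two-terminal gadget), and the confluent case follows because, as in Theorems~\ref{thm:extended} and~\ref{thm:hard}, the hard unsplittable instances can be taken to have the structural property (e.g. each source sends a single demand and the relevant subgraphs are ``tree-like'' near the sink) that forces any feasible unsplittable routing to be confluent.

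The main obstacle I expect is bookkeeping the two competing quantities through the recursion so that the arithmetic lands on $\sqrt{\log(d_{max}/u_{min})}$ rather than something larger or smaller: one must choose the per-level blow-up in demand size, the per-level size $m_0$ of the gadget, and the number of levels $k$ as coupled functions of the target $\epsilon$, and verify that (i) the demand spread stays $1+\Delta$ at the bottom while the bottleneck value $d_{max}/u_{min}$ grows to the prescribed $2^{\Theta(k)}$, (ii) the soundness gap multiplies cleanly across levels without the YES case degrading (this needs that a global all-demands routing in $G_{j+1}$ restricts to an all-demands routing in each embedded $G_j$, which is where the single-sink structure is essential), and (iii) the final inapproximability exponent, after substituting $\log m = \Theta(k\log m_0)$, is $m^{\frac12-\epsilon}\sqrt{\log(d_{max}/u_{min})}$ for the given $\epsilon$. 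A secondary technical point is ensuring the NP-hardness (rather than merely $P\neq NP$) statement: this requires the base reduction to be from an NP-complete gap problem with a fixed gap, and that the recursive composition has polynomial size for each fixed $k$, which is fine since $k$ depends only on $\epsilon$. I would also double-check that the node-capacity model needed for confluent flow survives the substitution; this is routine since one can simulate edge capacities by subdividing edges and capping the new node, a transformation already used in Theorems~\ref{thm:extended} and~\ref{thm:hard}.
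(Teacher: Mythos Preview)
Your proposal has a genuine gap: you never pin down a concrete construction, and the arithmetic you sketch for the recursive composition does not produce the claimed $\sqrt{\log(d_{max}/u_{min})}$ factor. If you compose the half-grid gadget with itself $k$ times, and each level has $m_0$ edges with a YES/NO gap of $m_0^{1/2-\epsilon}$, then after $k$ levels you have $m \approx m_0^k$ edges and a gap of $m_0^{k(1/2-\epsilon)} = m^{1/2-\epsilon}$: the exponent does \emph{not} improve, and no extra $\sqrt k$ appears. Your assertion that ``each additional layer roughly doubles the bottleneck value'' is also unsupported---the half-grid of Theorem~\ref{thm:hard} has bottleneck value $1+\Delta$ for arbitrarily small $\Delta$, and nesting such gadgets does not obviously inflate that ratio at all, let alone geometrically. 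You seem to be conflating the Azar--Regev path (where a path of length $\ell$ gives demand spread $2^\ell$) with the half-grid, and hoping that some unspecified substitution reconciles the two; but you never say what is substituted for what, and your own ``wait---one must be careful here'' signals that the accounting is not actually closed.

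The paper's proof is not recursive at all. It takes a $p\times p$ half-grid (as in Theorem~\ref{thm:hard}) and, instead of placing a single demand at each row-source $s_i$, attaches an entire Azar--Regev path of length $q$ there, with demand values $2^{(i-1)q},\ldots,2^{iq-1}$ chosen so that the \emph{total} demand on path $i$ is smaller than the \emph{smallest} demand on path $i+1$. On a YES-instance all $pq$ demands route; on a NO-instance at most one demand survives per Azar--Regev path, and then the half-grid argument of Claim~\ref{cl:cross} kills all but one of those survivors. The resulting gap is $pq$, the demand spread is $2^{pq-1}$ (so $\log(d_{max}/u_{min})\approx pq$), and the instance has $m=O((pq+p^2)|E(H)|)$ edges. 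Choosing $q\ge p$ and $pq\ge |E(H)|^{1/\epsilon}$ then gives $m^{1/2-\epsilon}\cdot\sqrt{\log(d_{max}/u_{min})}\le \sqrt{pq}\cdot\sqrt{pq}=pq$, which is exactly the gap. The point you are missing is that the Azar--Regev path is the component that buys gap \emph{linearly} in its size (gap $q$ from $q$ gadgets), whereas the half-grid buys gap only as the square root of its size (gap $p$ from $p^2$ gadgets); the theorem is obtained by balancing these two, not by iterating either one.
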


Once again we see the message that there is a sharp cutoff for  $d_{max}/u_{min} > 1$ even in the
large-demand regime. This is because
 if the bottleneck value is at most $1$, then the
no-bottleneck assumption holds and, consequently, the single-sink unsplittable flow problem
admits a constant-factor approximation (not $\sqrt{m}$ hardness).
We mention that a similar hardness bound cannot hold   for the maximum throughput objective, since one can
always reduce to the case where $d_{max}/u_{min}$ is small with a polylogarithmic loss, and hence the lower
bound becomes at worst $O(m^{\frac12-\epsilon}\cdot \log m)$. We feel the preceding hardness bound is all the more
interesting since known greedy techniques
yield almost-matching  upper bounds, even for general multiflows.
\begin{thm}
\label{thm:upper}
There is an $O(\sqrt{m}\log (\frac{d_{max}}{u_{min}}))$ approximation algorithm
for cardinality unsplittable flow and an $O(\sqrt{m}\log n)$ approximation algorithm
for throughput unsplittable flow, in both directed and undirected graphs.
\end{thm}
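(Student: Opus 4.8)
The plan is to establish both bounds via the standard ``bucket the demands by size, then run a bounded‑length greedy inside each bucket'' paradigm; since this never uses that there is a single sink (nor that the graph is undirected), the algorithm applies equally to directed graphs and to general multiflows. Scale so that $u_{min}=1$, and note we may assume $d_{min}\le u_{min}$ (repeatedly delete the minimum‑capacity edge as long as every demand exceeds it -- no demand can use that edge, so nothing routable is lost). Partition the demands into size buckets $C_0=\{i:d_i\le 1\}$ and $C_k=\{i:2^{k-1}<d_i\le 2^k\}$ for $k\ge 1$; there are $O(\log(d_{max}/u_{min}))$ nonempty ones. The algorithm runs a routine on each bucket in isolation and returns the best of the solutions obtained. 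Since $\mathrm{OPT}$ is the sum of its restrictions to the buckets, this loses only a factor equal to the number of buckets, so it suffices to obtain an $O(\sqrt m)$‑approximation within a single bucket.

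For a bucket $C_k$ with $k\ge 1$, delete every edge of capacity below $2^{k-1}$ (no demand of the bucket can use it, so $\mathrm{OPT}|_{C_k}$ is unaffected), then apply the standard bucket‑rounding step: inflate each surviving demand to the bucket maximum and round capacities down to integer multiples of it, obtaining an integer‑capacitated disjoint‑paths instance whose optimum is within a constant factor of $\mathrm{OPT}|_{C_k}$. On this instance run the bounded‑length greedy: in an arbitrary order, route each demand on a shortest available $s_i$--$t$ path of at most $\sqrt m$ edges, rejecting it if none exists. The standard argument then yields the $O(\sqrt m)$ guarantee: the accepted paths are short, so collectively they use at most $\sqrt m\cdot\mathrm{ALG}$ units of capacity (where $\mathrm{ALG}$ is the number of demands we accept); and any demand routed by the optimum but rejected by us is witnessed either by a saturated edge lying on a short optimal path (charged, with bounded multiplicity, to those $\sqrt m\cdot\mathrm{ALG}$ capacity units) or by its optimal path being long ($>\sqrt m$ edges), the number of which is small since the optimal routing crosses each edge at most to capacity. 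Bucket $C_0$ already satisfies the no‑bottleneck assumption, so there I would invoke the known algorithms for that regime (for throughput, the constant‑factor algorithm of \cite{Dinitz99}, and its cardinality analogue). Assembling the buckets yields the $O(\sqrt m\,\log(d_{max}/u_{min}))$ bound for cardinality.

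For the throughput objective the identical per‑bucket routine works, and a cardinality guarantee inside a bucket transfers to a throughput guarantee there because all demands in the bucket lie within a factor of $2$ of one another; the only thing that changes is the number of buckets. Here I would first apply the standard preprocessing that, at a constant‑factor loss, discards every demand smaller than an inverse‑polynomial (in $n$) fraction of the largest demand routed by the optimum -- its value can be guessed from the demand set. This caps the relevant demand spread at a polynomial in $n$, hence the number of buckets at $O(\log n)$, giving $O(\sqrt m\,\log n)$.

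The step I expect to be the main obstacle is making the per‑bucket $O(\sqrt m)$ analysis rigorous when capacities are large: the easy counting ``at most $m/\sqrt m$ long optimal paths'' relies on near‑disjointness, so one must take care in reducing a size bucket to a genuinely bounded‑congestion disjoint‑paths instance -- equivalently, in showing that truncating large capacities inside a bucket costs only a constant factor -- before the greedy argument closes. The secondary delicate point is the demand‑spread truncation behind the $O(\log n)$ (rather than $O(\log(d_{max}/u_{min}))$) factor for throughput, which I would carry out as in the existing literature on the polynomial‑demand regime.
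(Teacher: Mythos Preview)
Your overall architecture---split off the demands of size at most $u_{min}$ and handle them with the known no-bottleneck $O(\sqrt m)$ algorithm, bucket the remaining demands geometrically into $O(\log(d_{max}/u_{min}))$ classes, run a greedy per bucket, and for throughput first discard demands below $d_{max}/n$ to cap the number of buckets at $O(\log n)$---is exactly what the paper does. So at the structural level your plan matches the paper's proof.

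The one place you diverge is the per-bucket step, and you correctly flag it as the obstacle. The paper does not attempt to reduce a bucket to a bounded-congestion edge-disjoint-paths instance; instead it simply invokes, as a black box, the result of Guruswami et al.\ (building on Kolliopoulos--Stein) that shortest-path greedy is an $O(\Delta\sqrt m)$ approximation for cardinality unsplittable flow whenever $d_{max}\le \Delta\, d_{min}$, applied with $\Delta=2$. Your tentative fix---``truncating large capacities inside a bucket costs only a constant factor''---is not the right route and in general is false: an edge of capacity $K$ with unit demands can carry $K$ optimal paths, so truncation can lose an unbounded factor. The literature analysis works directly on the capacitated instance via a more refined charging (each greedy path $P_i$ blocks at most $|P_i|$ optimal paths, each of length $\ge |P_i|$, then a Cauchy--Schwarz/Chebyshev step against the total edge-load of the optimal routing), which does not need capacities to be small. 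Once you replace your sketched bounded-length greedy argument by this cited result, your proof coincides with the paper's.
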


We next present one hardness result for confluent flows assuming the no-bottleneck-assumption.
Again, recall that for the maximum single-sink unsplittable flow problem
there is a constant factor approximation algorithm given the no-bottleneck-assumption.
We prove this is not the case for the single-sink confluent flow problem by providing a super-constant lower bound.
Its proof is more complicated but builds on the techniques used for our previous results.
\begin{thm}\label{thm:hardnba}
Given the no-bottleneck assumption, the single-sink confluent flow problem
cannot be approximated to within a factor $O(\log^{1-\epsilon}n)$, for any $\epsilon > 0$, unless $P=NP$.
This holds for both the maximum cardinality and maximum throughput objectives
in undirected and directed graphs.
\end{thm}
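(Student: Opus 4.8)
\medskip
The plan is to reduce an NP-hard gap problem --- a gap version of \textsc{3-sat} suffices, by the PCP theorem --- to single-sink confluent flow, building the instance so that the confluence (tree) constraint is the only source of hardness and so that, in a no-instance, any confluent routing is forced into a \emph{harmonic} loss of magnitude $\Theta(\log n)$. Two features of the setting shape the construction. First, a constant-factor approximation is known for confluent flow on \emph{uniform}-capacity networks~\cite{Chen07}, so the hard instances must use non-uniform node capacities; nonetheless all demands and all capacities will be kept within a polynomial range $[1,\mathrm{poly}(n)]$, so that $d_{max}\le u_{min}$ and the \nba{} holds automatically. Second, the \nba{} rules out the blocking mechanism driving Theorems~\ref{thm:extended} and \ref{thm:hard}: there a single edge of capacity $u$ together with a demand of size $u\cdot\mathrm{poly}(n)$ acts as a ``gate'' that kills all but a $1/\mathrm{poly}(n)$ fraction of the demands at once, whereas under the \nba{} every edge that a demand can reach can carry it. The loss must therefore be built up incrementally, and the natural structure that does this --- and that matches the known $\Theta(\log n)$ integrality gaps for confluent flow and for the set-cover LP --- is a harmonic one, which caps the achievable gap at $\Theta(\log n)$; this is exactly why the hardness here is logarithmic rather than polynomial (and, since the support of a confluent flow is a tree on which unsplittable flow admits a polylogarithmic approximation, a near-matching upper bound is plausible).

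Concretely, I would design a node-capacitated undirected \emph{merge gadget} with several input ports and one output port towards $t$ such that: (a) a splittable --- indeed unsplittable --- flow can carry all its input streams out through the output port; (b) any \emph{confluent} flow loses a constant fraction of the flow entering on its input ports, since all streams reaching a common node must leave on a single capacity-bounded edge; and (c) whether the confluent flow actually pays this cost is controlled by a Boolean sub-gadget wired into it --- a locally satisfying assignment permits a lossless confluent routing, a locally falsifying one does not. Cascading these gadgets so that the routed flow is repeatedly merged on the way to $t$, with the ``surviving'' fraction after the $j$-th merge behaving like $1/j$ when the associated clauses fail, yields the harmonic gap: a yes-instance of gap \textsc{3-sat} induces an assignment that makes every merge lossless so that a confluent flow routes essentially everything, while a no-instance forces a constant fraction of the clauses to fail under every assignment and hence forces confluent throughput down by a factor $\Omega(\log^{1-\epsilon}n)$, for every fixed $\epsilon>0$.

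The step I expect to be the main obstacle is realizing (a)--(c) \emph{simultaneously} while respecting the \nba{} and staying undirected, and wiring the Boolean sub-gadgets together with a consistency mechanism that prevents a confluent routing from ``cheating'' with different assignments at different gadgets. The \nba{} removes the cheap small-capacity gate edges available for Theorems~\ref{thm:hard} and \ref{thm:harder}, and undirectedness removes the ability to funnel streams by orientation, so the one-way, fraction-losing merge has to be emulated by capacities alone and then nested without the capacities leaving a polynomial range or the instance leaving polynomial size; for this I would reuse and adapt the undirected confluent-flow gadgets already built for Theorem~\ref{thm:extended}. It is in trading off the robustness of the consistency enforcement against the depth of the cascade and the polynomial size bound that one is led to the exponent $1-\epsilon$ rather than $1$.

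The remaining extensions follow from machinery already in the paper. Directed graphs are immediate by orienting every edge towards $t$. The maximum-throughput hardness follows from the cardinality hardness by a variant of the perturbation device of Theorem~\ref{thm:hard}: replace the unit demands by demands in $[1,1+\Delta]$ for an arbitrarily small $\Delta>0$ and rescale the capacities so that $d_{max}\le u_{min}$ still holds, so that any near-optimal-throughput solution must route a near-maximum number of demands and the $\Omega(\log^{1-\epsilon}n)$ gap is inherited up to a $1+o(1)$ factor. A final check that all demand and capacity values remain polynomially bounded re-confirms the \nba{} and finishes the argument.
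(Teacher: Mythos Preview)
Your proposal is a research outline, not a proof: the central object---the ``merge gadget'' satisfying (a)--(c)---is never actually constructed, and you yourself flag its construction as ``the main obstacle.'' Without it, nothing is established. The consistency mechanism you anticipate needing (to prevent different assignments at different gadgets) is a genuine difficulty in PCP-style reductions, and you give no indication of how to build it while keeping the instance polynomial-size and the \nba{} intact.

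More to the point, the paper's proof needs none of this machinery. It reduces from the (capacitated, node-disjoint) \emph{2-disjoint paths} problem---a plain NP-hard decision problem, no PCP or gap amplification involved. The harmonic structure you correctly intuit is realized directly by a half-grid $G_N$ in which row and column $i$ carry capacity $\frac{1}{i}$, and the demand at $s_i$ consists of many \emph{tiny packets} summing to $\frac{1}{i}$; since each packet is much smaller than $\frac{1}{N}\le u_{min}$, the \nba{} holds automatically. Every degree-4 grid node is replaced by the \emph{same} 2-disjoint-paths instance $H$, so the consistency issue you worry about simply does not arise: either $H$ is a YES-instance and all crossings are simultaneously permitted (the canonical hooked paths then route $H_N=\sum_{i=1}^N \frac{1}{i}\approx\log N$), or $H$ is a NO-instance and the induced routing in $G_N$ is forced to be non-crossing. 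The technical heart is a combinatorial lemma showing that any non-crossing edge-confluent flow in $G_N$ has value at most $2$, proved by a nested-subgrid telescoping argument. Choosing $N$ as a large power of $|V(H)|$ turns the $H_N$ versus $2$ gap into $\Omega(\log^{1-\epsilon} n)$.

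The key device you missed is the tiny-packet trick: it lets the instance satisfy the \nba{} without giving up capacity-based control, because confluence still forces all packets through a single outgoing edge per node, and the $\frac{1}{i}$ capacities still cap the \emph{total} that can pass. This is also why no separate argument is needed for throughput versus cardinality---with packets this small, the two objectives coincide up to negligible granularity, so your proposed perturbation device is unnecessary here.
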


Finally, we include a hardness result for the congestion minimization problem for confluent flows.
That is, the problem of finding the minimum value $\alpha \geq 1$ such that all demands can be routed confluently if all node capacities are multiplied by $\alpha$.
This problem has two plausible variants.

An  {\em $\alpha$-congested}  routing is an unsplittable flow for the demands where the total load on any node is at most $\alpha$ times its associated capacity.
A {\em strong congestion} algorithm is one where  the resulting flow must route on a tree $T$ such that for  any demand $v$ the nodes on its path in $T$ must have capacity at least $d(v)$. A  {\em weak congestion} algorithm does not require this extra constraint on the tree capacities.
 Both variants are of possible interest. If the motive for congestion is to route all demands in some limited number $\alpha$ of rounds of admission, then each round should be feasible on $T$ - hence strong congestion is necessary.  On the other hand, if the objective is to simply augment network capacity so that all demands can be routed, weak congestion is the right notion.
In Section~\ref{sec:congestion} we show that it is hard to approximate strong congestion to within polynomial factors.
 \begin{thm}
 \label{thm:strongcongestion}
 It is NP-hard to approximate the minimum (strong) congestion problem for single-sink confluent flow instances
 (with polynomial-size demands) to factors of at most $m^{.5-\epsilon}$ for any $\epsilon>0$.
 \end{thm}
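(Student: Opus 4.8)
The plan is to reduce directly from the $m^{\frac12-\epsilon}$-inapproximability of maximum single-sink confluent flow in the small-demand regime (Theorem~\ref{thm:hard}), with no gadget construction at all: an instance produced there already \emph{is} a single-sink confluent flow instance, and we simply reinterpret it as an instance of minimum strong congestion and track how the YES/NO gap transforms. Fix a tiny $\Delta>0$. An instance $\mathcal I$ from Theorem~\ref{thm:hard} has $k$ demands $d_1,\dots,d_k$ with $\tfrac{d_{max}}{d_{min}}=1+\Delta$; after scaling we may assume $d_{min}=1$ and all $d_i\in[1,1+\Delta]$ (all numbers in $\mathcal I$ are polynomial, which is why the resulting hard instances have polynomial-size demands). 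It has node capacities $u(\cdot)$, a sink $t$, and a gap: in the YES case all $k$ demands route confluently within the capacities, while in the NO case at most $k/g$ do, where $g=m^{\frac12-\epsilon'}$. We may assume $u(v)>0$ for all $v$ and every source reaches $t$, so the minimum strong congestion of $\mathcal I$ is always finite and polynomially bounded (if the instances of Theorem~\ref{thm:hard} are not already of this form, append to each $s_i$ a private path to $t$ through a fresh node of negligible capacity; this affects neither $m$ nor the two cases by more than lower-order terms). In the YES case, the optimal confluent routing has $L(v)\le u(v)$ at every node, and any demand $i$ with $v$ on its path satisfies $u(v)\ge L(v)\ge d_i$, so it is a valid \emph{strong} routing of all demands with congestion $1$; hence $\opt=1$.

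The core of the argument is a \emph{decongestion lemma}: if all $k$ demands of $\mathcal I$ admit a strong routing of congestion $\alpha$, then some confluent routing of at least $\tfrac{k}{2\alpha(1+\Delta)}$ of the demands obeys the original capacities exactly. Assuming this, the NO case forces $\tfrac{k}{2\alpha(1+\Delta)}\le k/g$, i.e.\ every strong routing has congestion $\alpha\ge\tfrac{g}{2(1+\Delta)}=\Omega(m^{\frac12-\epsilon'})$. Thus a YES instance has $\opt=1$ and a NO instance has $\opt=\Omega(m^{\frac12-\epsilon'})$, so (choosing $\epsilon$ relative to $\epsilon'$ and absorbing the at-most-constant change in $m$) an $m^{\frac12-\epsilon}$-approximation for minimum strong congestion would decide the original NP-hard problem.

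To prove the decongestion lemma, take a strong routing of all demands on a tree $T$ rooted at $t$ with node loads $L(v)\le\alpha u(v)$; by strongness every node $v$ that carries flow has $u(v)\ge d_i\ge1$ for each demand $i$ through it, so in particular at most $L(v)\le\alpha u(v)$ demands pass through $v$ (and any demand that does has size at most $u(v)$). A subset $S$ of the demands, kept on their $T$-paths, obeys the original capacities provided that for every such $v$ we keep at most $c(v)$ demands whose source lies in the subtree $T_v$, where we set $c(v):=\max\{1,\lfloor u(v)/(1+\Delta)\rfloor\}$: if $u(v)\ge1+\Delta$ then $(1+\Delta)c(v)\le u(v)$; if $1\le u(v)<1+\Delta$ then $c(v)=1$ and the single retained demand has size $\le u(v)$ by strongness; nodes with $u(v)<1$ carry nothing. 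The constraints $\sum_{i:\,s_i\in T_v}x_i\le c(v)$ form a laminar system, so ordering the demands by a DFS preorder of $T$ turns the constraint matrix into an interval matrix (consecutive ones in each row), hence totally unimodular; combined with the box $0\le x_i\le1$ and integral right-hand sides, the LP relaxation of "maximise $\sum x_i$" has an integral optimum. The uniform point $x_i=\tfrac{1}{2\alpha(1+\Delta)}$ is feasible (using $c(v)\ge\tfrac{u(v)}{2(1+\Delta)}$ when $u(v)\ge1+\Delta$, and $c(v)=1\ge\tfrac{\alpha u(v)}{\alpha(1+\Delta)}\cdot\tfrac12$ otherwise, against at most $\alpha u(v)$ demands through $v$), so there is an integral solution selecting $|S|\ge\tfrac{k}{2\alpha(1+\Delta)}$ demands; these route confluently within $T$ inside the original capacities.

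The step I expect to be the main obstacle is exactly this decongestion lemma, and within it two points: first, that the \emph{strong} assumption is what makes the counting work — it rules out a routing tree that shoves (near-unit) demands through nodes of capacity below $d_{min}$, which is precisely the phenomenon that would break the argument for weak congestion (and is why the theorem is stated only for strong congestion); and second, that the subtree-load constraints are laminar, so that one loses only an $O(\alpha)$ factor rather than an additional $O(\log n)$ in the decongestion, keeping the $m^{\frac12-\epsilon}$ exponent intact.
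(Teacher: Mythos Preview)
Your approach is correct, but it is a genuinely different route from the paper's. The paper argues directly on the half-grid instance $\mathcal{G}$: in the NO case, strongness forces demand $i$ to traverse all of its row (it cannot drop into any lower-indexed column, whose entry nodes have capacity $c_j<c_i$), and then confluence together with the NO gadgets forces the demands to merge one by one, so that ultimately all $N$ demands pass through a single node of capacity below $2$, giving congestion $\Omega(N)$ in a couple of sentences. You instead black-box Theorem~\ref{thm:hard} through a decongestion lemma: a strong $\alpha$-congested routing on a tree $T$ yields, via a laminar packing LP, a feasible confluent routing of an $\Omega(1/\alpha)$ fraction of the demands, and the cardinality gap of Theorem~\ref{thm:hard} then lower-bounds $\alpha$. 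The paper's argument is short and entirely ad hoc to the grid; yours is more work but modular---it would carry over to any hardness construction for maximum confluent flow with bounded demand spread, and it isolates exactly where strongness enters (capping the number of near-unit demands through a small-capacity node). One small blemish: your proposed patch of appending private $s_i$--$t$ paths through nodes of ``negligible capacity'' does nothing for \emph{strong} congestion, since a unit demand is barred from entering such a node; fortunately the patch is unnecessary, as the half-grid instances already admit finite strong routings, and in any case an infinite NO-side optimum only helps the gap argument.
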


\subsection{Overview of Paper}
At the heart of our reductions are gadgets based upon the {\em capacitated} $2$-disjoint paths problem.
We discuss this problem in Section \ref{sec:two-disjoint-paths}.
In Section~\ref{sec:lower}, we prove the $\sqrt{m}$ hardness of maximum single-sink unsplittable/confluent
flow in the small demand regime (Theorem \ref{thm:hard}); we give a similar hardness for single-sink priority flow
(Corollary \ref{cor:priority}).
Using a similar basic construction, we prove, in Section~\ref{sec:confwithNBA}, the logarithmic hardness
of maximum single-sink confluent flow
even given the no-bottleneck assumption (Theorem \ref{thm:hardnba}).
In Section~\ref{sec:stronger}, we give lower bounds
on the approximability of the cardinality objective for general demand
regimes (Theorems \ref{thm:extended} and \ref{thm:harder}).
Finally, in Section~\ref{sec:upper}, we present an almost matching upper bound for unsplittable flow (Theorem \ref{thm:upper}).
and priority flow.

\section{The Two-Disjoint Paths Problem}\label{sec:two-disjoint-paths}

Our hardness reductions require gadgets based upon the {\em capacitated} $2$-disjoint paths problem.
Before describing this problem, recall the classical $2$-disjoint paths problem:\\

\noindent {\tt 2-Disjoint Paths (Uncapacitated):} Given a graph $G$ and node pairs $\{x_1, y_1\}$
and $\{x_2, y_2\}$. Does $G$ contain paths $P_1$ from $x_1$ to $y_1$
and $P_2$ from $x_2$ to $y_2$ such that $P_1$ and $P_2$ are disjoint?\\

Observe that this formulation incorporates four distinct problems
because the graph $G$ may be directed or undirected and the desired paths
may be edge-disjoint or node-disjoint.
In undirected graphs the $2$-disjoint paths problem, for both edge-disjoint and node disjoint paths,
can be solved in polynomial time
-- see  Robertson and Seymour~\cite{RS95}.
In directed graphs, perhaps surprisingly, the problem is NP-hard. This is the
case for both edge-disjoint and node disjoint paths, as shown
by  Fortune, Hopcroft and Wyllie~\cite{FHW80}.

In general, the unsplittable and confluent flow problems concern capacitated graphs.
Therefore, our focus is on the capacitated version of the $2$-disjoint paths problem. \\

\noindent {\tt 2-Disjoint Paths (Capacitated):} Let $G$ be a graph whose edges have
capacity either $\a$ or $\b$, where $\b \ge \a$.
Given node pairs $\{x_1, y_1\}$
and $\{x_2, y_2\}$, does $G$ contain paths $P_1$ from $x_1$ to $y_1$
and $P_2$ from $x_2$ to $y_2$ such that:\\
(i) $P_1$ and $P_2$ are disjoint.\\
(ii) $P_2$ may only use edges of capacity $\b$. ($P_1$
may use both capacity $\a$ and capacity $\b$ edges.) \\

For directed graphs, the result of Fortune et al.~\cite{FHW80} immediately implies that the capacitated version is
hard -- simply assume every edge has capacity $\beta$. In undirected graphs, the case of node-disjoint paths
was proven to be hard by Guruswami et al.~\cite{guruswami2003near}. The case of edge-disjoint paths was recently
proven to be hard by Naves, Sonnerat and Vetta~\cite{naves2010maximum}, even in
planar graphs where terminals lie on the outside face (in an interleaved order, which will be important for us). These results are summarised in Table \ref{table:hardness}.
\begin{table}[h]
      \centering
          \begin{tabular}{|c|cc|}
          \hline
& Directed & Undirected \\
\hline
Node-Disjoint  &    NP-hard \cite{FHW80} & NP-hard \cite{guruswami2003near} \\
Edge-Disjoint  & NP-hard \cite{FHW80} & NP-hard  \cite{naves2010maximum} \\
\hline
   \end{tabular}
\caption{Hardness of the Capacitated 2-Disjoint Paths Problem}\label{table:hardness}
\end{table}

Recall that the unsplittable flow problem has capacities on edges, whereas
the confluent flow problem has capacities on nodes.
Consequently, our hardness reductions for
unsplittable flows  require gadgets based upon
the hardness for edge-disjoint paths \cite{naves2010maximum};
for confluent flows we  require gadgets based upon
the hardness for node-disjoint paths \cite{guruswami2003near}.

\section{Polynomial Hardness of Single-Sink Unsplittable,\\ Confluent and Priority Flow}
\label{sec:lower}

In this section, we establish that the single-sink maximum unsplittable and confluent flow problems
are hard to approximate within polynomial factors for both the cardinality and throughput objectives.
We will then show how these hardness results extend to the single-sink maximum priority flow problem.
We begin with the small demand regime by proving Theorem~\ref{thm:hard}.
Its proof introduces some core ideas that are used in later sections
in the proofs of Theorems \ref{thm:harder} and \ref{thm:hardnba}.

\subsection{$\sqrt{n}$-Hardness in the Small Demand Regime}
\label{sec:hardness}


Our approach uses a grid routing structure much as in the hardness proofs of Guruswami et al.~\cite{guruswami2003near}.
Specifically:

(1) We introduce a graph $G_N$ that has the following properties.
There is a set of pairwise crossing paths that can route demands of total value,
$\sum_{i=1}^{N} (1+ i \delta) =N + \delta \frac12 N(N+1)$.  On the other hand, any collection of pairwise non-crossing
paths can route at most $d_{max} = 1+N\delta$ units of the total demand. For a given  $\Delta \in (0,1)$ we
choose $\delta$ to be small enough so that $d_{max} \leq 1+\Delta < 2$.

(2) We then build a new network $\mathcal{G}$ by replacing each node of $G_N$ by
an instance of the capacitated $2$-disjoint paths problem. This routing problem is chosen
because it induces the following properties.  If it is a YES-instance, then a maximum unsplittable (or confluent) flow on
$\mathcal{G}$ corresponds to routing demands in $G_N$ using pairwise-crossing paths. In contrast, if it is a
NO-instance, then a maximum unsplittable or confluent flow on $\mathcal{G}$  corresponds to routing
demands in $G_N$ using pairwise non-crossing paths.

Since $G_N$ contains $n=O(N^2)$ nodes, it follows that an approximation algorithm with guarantee better than
$\Theta(\sqrt{n})$  allows us to distinguish between YES- and NO-instances of our
routing problem, giving an inapproximability lower bound of $\Omega(\sqrt n)$.
Furthermore, at all stages we  show how this reduction can be applied using
only undirected graphs.  This will prove Theorem \ref{thm:hard}.

\subsubsection{A Half-Grid Graph $G_N$}

\begin{figure}[th]
\begin{center}
\includegraphics[height=12cm]{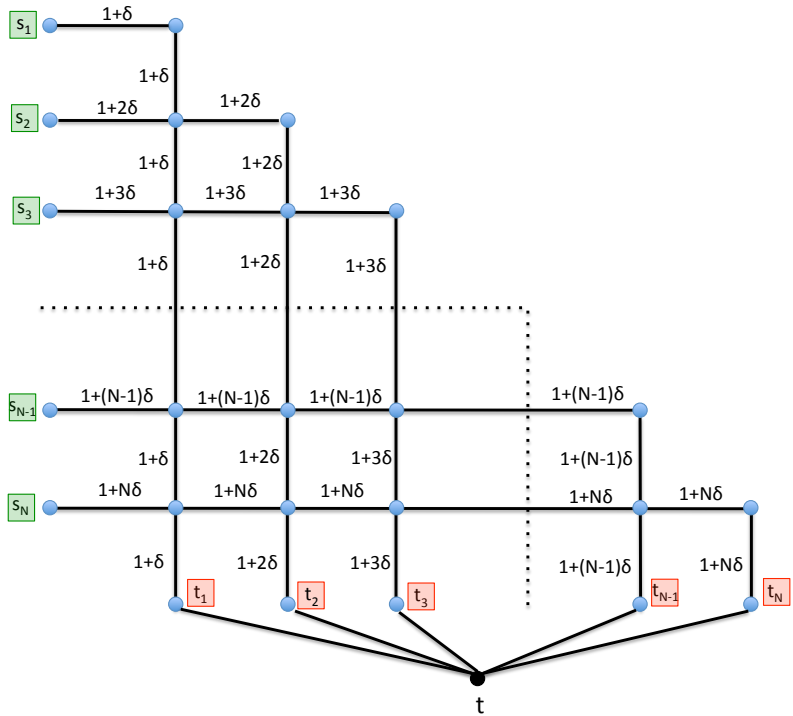}
\caption{\label{fig.grid} A Half-Grid $G_N$.}
\end{center}
\end{figure}

Let's begin by defining the graph $G_N$.  There are $N$ rows (numbered from
top to bottom) and $N$ columns (numbered from left to right). We call the leftmost node in the $i^{th}$ row $s_i$, and
the bottom node in the $j^{th}$ column $t_j$. There is a demand of size $\capp{i} := 1+i\delta$  located at $s_i$.
Recall, that $\delta$ is chosen so that all demands and capacities
lie within a tight range $[1,1+\Delta]$ for fixed $\Delta$ small. All the edges
in the $i^{th}$ row and all the edges in the $i^{th}$ column have capacity
$\capp{i}$.  The $i^{th}$ row extends as far as the $i^{th}$ column and vice
versa; thus, we obtain a ``half-grid" that is a weighted version of the
network considered by Guruswami et al. \cite{guruswami2003near}.  Finally we add a sink $t$.
There is an edge of capacity $\capp{j}$ between $t_j$ to $t$. The
complete construction is shown in Figure~\ref{fig.grid}.

For the unsplittable flow problem we have edge capacities. We explain later how the node capacities
are incorporated for the confluent flow problem. We also argue about the undirected and directed reductions together.
For directed instances we always enforce  edge directions to be downwards
and to the right.

Note that there is a unique $s_i-t$ path $P^*_i$ consisting only of edges of
capacity $\capp{i}$, that is, the hooked path that goes from $s_i$ along the
$i^{th}$ row and then down the $i^{th}$ column to $t$. We call this the {\em canonical} path
for demand $i$.

\begin{claim}\label{cl:cross}
Take two feasible paths $Q_i$ and $Q_j$ for demands $i$ and $j$.
If $i<j$, then the paths must cross on row $j$, between columns $i$ and $j-1$.
\end{claim}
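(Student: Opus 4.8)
The plan is to argue directly from the topology of the half-grid $G_N$ together with the fact that feasible paths must respect the capacity constraints, which forces a demand to stay on ``high-capacity'' edges. First I would set up the basic structure: a feasible path $Q_i$ for demand $i$ starts at $s_i$ (the left endpoint of row $i$) and ends at $t$, so it must reach column $N$'s descending structure via the sink edges; moreover, since demand $i$ has size $\capp{i}=1+i\delta$, the path $Q_i$ may only traverse edges of capacity at least $\capp{i}$, i.e.\ edges lying in rows/columns indexed $\geq i$. In particular $Q_i$ never enters rows $1,\dots,i-1$ nor columns $1,\dots,i-1$, so $Q_i$ is confined to the sub-half-grid spanned by rows and columns $\{i,i+1,\dots,N\}$.

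Next I would locate where $Q_i$ and $Q_j$ must meet. Consider $Q_j$ for the larger index $j$: it lives entirely in rows/columns $\geq j$, and in particular it must at some point be on row $j$ (it starts at $s_j$ on row $j$) and must eventually descend a column $\geq j$ to reach $t$. Now consider $Q_i$: it starts at $s_i$, which is to the left of column $i$, and must eventually reach $t$, so it must travel rightward past column $j-1$ and into columns $\geq j$ at some stage, and it must also descend past row $j$ (since to reach $t_k$ for some $k\geq j$ one descends a column below row $j$). The key combinatorial point is a planarity/crossing argument: $Q_i$ enters the region ``rows and columns $\geq j$'' from the left/top, $Q_j$ is trapped inside that region but its source $s_j$ sits on the left boundary (row $j$, column $j$) — so the two paths separate the rectangle in a way that forces an intersection. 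More carefully, I would track the first row on which $Q_i$ occupies a column index $\geq i$ simultaneously with being ``below'' where $Q_j$ sits, and use the fact that on any given column only downward movement is possible (in the directed case) or, in the undirected case, use a parity/Jordan-curve argument in the planar embedding.

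The cleanest route, I expect, is: restrict attention to row $j$. The path $Q_j$ must traverse (at least) the portion of row $j$ from $s_j=$ (node at row $j$, column — actually its leftmost node) rightward until it turns downward at some column $c\geq j$; so $Q_j$ occupies the entire segment of row $j$ between columns $i$ and $j-1$ (indeed between column... its start and column $c-1\geq j-1$). Meanwhile $Q_i$, confined to columns $\geq i$ and needing to get from the top part of the grid down to $t$, must cross row $j$ somewhere in a column with index in $[i, j-1]$: it cannot cross in a column $\geq j$ without first having crossed the vertical segment of $Q_j$, and it cannot be in a column $< i$. Pinning down that $Q_i$'s crossing of row $j$ happens strictly between columns $i$ and $j-1$ uses that $Q_j$'s downward turn is at column $\geq j$ and $Q_i$ must be left of that turn when it is on row $j$ (else the two planar curves would have had to cross already on a column $\geq j$, which is the same conclusion).

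The main obstacle I anticipate is making the ``must cross'' step fully rigorous in the \emph{undirected} case, where paths need not be monotone: there one cannot simply say ``$Q_i$ goes down and right.'' I would handle this by invoking planarity — $G_N$ is drawn in the plane, $Q_i\cup Q_j$ together with the outer boundary forms closed curves, and $s_i,s_j,t$ have a fixed cyclic order on the outer face — so a Jordan curve argument shows the two $s$–$t$ curves with interleaved endpoints must intersect, and then a finer accounting of which rows and columns are available to each path (from the capacity restriction above) localizes the intersection to row $j$, columns $i$ through $j-1$. I would also note that the same argument covers the confluent case once node capacities are in place, since the capacity restriction ``demand $i$ avoids rows/columns $<i$'' holds verbatim.
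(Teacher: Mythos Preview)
Your outline heads in the right direction but is considerably more involved than the paper's argument, and a couple of your intermediate claims are incorrect, stemming from a slight misreading of the half-grid structure.

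The paper's proof rests on one crisp observation that you circle around but never state cleanly: \emph{$Q_j$ must contain all of row $j$}. This holds because demand $j$ has size $c_j$, so it cannot use any edge of columns $1,\dots,j-1$ (each has capacity strictly less than $c_j$); since row $j$ terminates at column $j$, the path has no choice but to traverse the entire row before it can leave it. With this in hand the rest is immediate and purely combinatorial: the nodes of row $j$ form a cut separating $s_i$ from $t$ in $G_N$, so $Q_i$ must meet $Q_j$ on row $j$; $Q_i$ cannot share a row-$j$ edge with $Q_j$ because $Q_j$ already saturates those edges (their capacity is exactly $c_j$), so the paths genuinely \emph{cross}; and the crossing column $k$ satisfies $k\ge i$ (capacity) and $k\le j-1$ (column $j$ does not extend above row $j$, so $Q_i$ cannot arrive at row $j$ from above via column $j$). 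No planarity or Jordan-curve argument is needed, and the proof is identical for directed and undirected instances.

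Two specific places your plan goes astray. First, you assert that $Q_i$ ``must travel rightward past column $j-1$ and into columns $\ge j$''; this is false --- $Q_i$ might simply descend column $i$ and never touch a column $\ge j$. Second, you write that $Q_j$ turns down from row $j$ at some column $c\ge j$; in the half-grid row $j$ ends at column $j$, so $c=j$ exactly. Both slips come from picturing a full grid rather than the half-grid. Once you internalize that row $j$ stops at column $j$ and that $Q_j$ therefore occupies \emph{every} node of row $j$, the separation and crossing fall out in two lines without any topological machinery.
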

\noindent{\bf Proof.}
Consider demand $i$ originating at $s_i$. This demand cannot
use any edge in columns $1$ to $i-1$ as it is too large.
Consequently, any feasible path $Q_i$ for demand $i$ must include all of row $i$.
Similarly, $Q_j$ must contain all of row $j$. Row $j$ cuts off $s_i$ from the sink $t$, so
$Q_i$ must meet $Q_j$ on row $j$.
Demand $i$ cannot use an edge in row $j$ as demand $j$ is already using up all the capacity along that row.
Thus $Q_i$ crosses $Q_j$ at the point they meet. As above, this meeting cannot occur
in columns $1$ to $i-1$. Thus the crossing point must occur on some column between $i$ and $j-1$
(by construction of the half-grid, column $j$ only goes as high as row $j$ so the crossing cannot be there).
\qed

By Claim \ref{cl:cross}, if we are forced to route using pairwise non-crossing paths, then only one demand
can route. Thus we can route at most a total of $\capp{N}=1+\delta N<2$ units of demand.

\subsubsection{The Instance $\mathcal{G}$}
We build a new instance $\mathcal{G}$ by replacing each degree $4$ node in $G_N$ with an instance
of the $2$~-~disjoint paths problem.
For the unsplittable flow problem in undirected graphs we use gadgets $H$ corresponding to
the capacitated edge-disjoint paths problem. Observe that a
node at the intersection of column $i$ and row $j$ (with $j > i$) in $G_N$ is incident to two edges of capacity
$\capp{i}$ and to two edges of weight $\capp{j}$.  We
construct  $\mathcal{G}$ by replacing each such node of degree four with
the routing graph $H$. We do this in such a way that the capacity $\capp{i}$
edges of $G_N$ are incident to $x_1$ and $y_1$, and the  $\capp{j}$
edges are incident to $x_2$ and $y_2$.
We also let $\a=\capp{i}$ and $\b=\capp{j}$.

For the confluent flow problem in undirected graphs we now have node capacities. Hence we use gadgets $H$ corresponding to
the node-capacitated 2-paths problem discussed above.
Again $x_1$ and $y_1$ are given
capacity $\capp{i}$ whilst $x_2$ and $y_2$ have capacity $\capp{j}$.

For directed graphs, the mechanism is simpler as the gadgets may now come from the
uncapacitated disjoint paths problem. Thus the hardness comes from the directedness and not from the capacities.
Specifically, we may set the edge capacities to be $C=\max\{\capp{i},\capp{j}\}$. Moreover, for unsplittable flow we may
perform the standard operation of splitting each node in $H$ into two, with
 the new induced arc having capacity of $C$. It follows that if there are two flow paths through $H$, each carrying at
 least $\capp{i} \geq \capp{j}$ flow, then they must be from $x_1$ to $y_1$ and $x_2$ to $y_2$.
 These provide a solution to the node-disjoint directed paths problem in $H$.

The hardness result will follow once we see how this construction relates to
crossing and non-crossing collections of paths.

\begin{lemma}\label{lem:yes}
If $H$ is a YES-instance, then the maximum unsplittable/confluent flow in
$\mathcal{G}$ has value at least $N$.  For a NO-instance the
maximum unsplittable/confluent flow has value at most $1+\Delta < 2$.
\end{lemma}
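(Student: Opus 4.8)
The plan is to establish the two directions separately, in each case translating a flow in $\mathcal{G}$ into a collection of paths in $G_N$ and invoking Claim~\ref{cl:cross}. I would start with the easy YES-direction. Suppose $H$ is a YES-instance, so it contains disjoint paths $P_1$ (from $x_1$ to $y_1$, using any edges) and $P_2$ (from $x_2$ to $y_2$, using only capacity-$\b$ edges). In $\mathcal{G}$, route every demand $i \in \{1,\dots,N\}$ along its canonical path $P^*_i$: this path passes through each gadget at the intersection of row $i$ with a column $\ell$. At such a gadget, if $\ell < i$ the gadget sits on row $i$ among its ``small-capacity'' sides, so demand $i$ uses the $P_1$-type routing; if $\ell$ indexes a column with $\ell \le i$ we need to check which pair of terminals the canonical path enters and leaves through, and verify the capacity labels $\a=\capp{\min},\ \b=\capp{\max}$ are consistent with demand $i$'s size. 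The key observation is that along $P^*_i$ the relevant capacity is always exactly $\capp{i}$, which equals $\a$ or $\b$ appropriately at each gadget, so each canonical path can be realized through each gadget, and distinct canonical paths meeting at a gadget use the two vertex/edge-disjoint internal routes (one as the ``$P_1$'' path, one as the ``$P_2$'' path). Hence all $N$ demands route simultaneously, giving flow value $\sum_i \capp{i} \ge N$.

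For the NO-direction I would argue the contrapositive of ``crossing.'' Suppose $H$ is a NO-instance and consider any feasible unsplittable (or confluent) flow in $\mathcal{G}$ routing a set $S$ of demands. Each routed demand $i$ induces a path $Q_i$ in $G_N$ by contracting each gadget back to a node (a routed demand must enter and leave each gadget it visits, since there is no terminal of $\mathcal{G}$ inside a gadget other than the global ones). I claim no two such induced paths $Q_i, Q_j$ can \emph{cross} at a shared gadget: a crossing at the intersection of row $j$ and column $\ell$ (with $i<j$) would mean one demand enters/leaves via the $\{x_1,y_1\}$ terminals and the other via the $\{x_2,y_2\}$ terminals, with the larger demand forced onto capacity-$\b$ edges only — i.e., exactly a YES-certificate for $H$, a contradiction. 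So the $\{Q_i : i \in S\}$ are pairwise non-crossing. By Claim~\ref{cl:cross}, any two feasible paths for distinct demands \emph{must} cross, so $|S| \le 1$, and the single routed demand has size at most $d_{max} = 1+N\delta \le 1+\Delta < 2$.

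The main obstacle, and where I'd spend the most care, is the bookkeeping in the gadget-substitution argument: verifying precisely which two $G_N$-edges at a degree-4 node map to $\{x_1,y_1\}$ versus $\{x_2,y_2\}$, confirming the capacity assignment $\a = \capp{i} \le \b = \capp{j}$ matches the directionality of the half-grid, and checking that the confluent case (node capacities, gadgets from the node-disjoint problem) and the directed case (node-splitting, uncapacitated gadgets) both reduce to the same crossing/non-crossing dichotomy. In particular, for confluent flow I must confirm that ``two demands meeting at a node traverse identical paths onward'' is compatible with the two demands using the two disjoint internal routes of a gadget — i.e., that passing through a gadget does not count as ``meeting'' in the confluent sense, because the demands remain on separate internal routes. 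I would also note carefully that a feasible path in $\mathcal{G}$ projecting to $Q_i$ in $G_N$ respects exactly the capacity constraints of Claim~\ref{cl:cross}, so the claim applies verbatim; the rest is then immediate.
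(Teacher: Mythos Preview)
Your proposal is correct and follows essentially the same approach as the paper: route along canonical paths for the YES-case (using the two disjoint routes through each gadget), and for the NO-case project any two routed demands back to $G_N$, invoke Claim~\ref{cl:cross} to force a crossing, and observe that this crossing would furnish a YES-certificate for $H$. The paper's proof is slightly crisper in pinning down the crossing gadget (row $j$, column $\ell$ with $i\le \ell\le j-1$, so $\b=\capp{j}$ and $\a=\capp{\ell}\ge \capp{i}$), which immediately explains why demand $j$ is confined to $\b$-edges there; your bookkeeping worries are exactly the right ones, and they resolve just as the paper indicates.
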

\noindent{\bf Proof.}  If $H$ is a YES-instance, then we can use its paths to produce
paths in $\mathcal{G}$, whose images in $G_N$,   are free to cross at any node.
Hence we can produce paths in $\mathcal{G}$ whose images are
the canonical paths $P^*_i, \, 1 \le i \le N$ in $G_N$.
This results in   a flow of value greater than
$N$.  Note that  in the confluent case, these paths yield
 a confluent flow as they only  meet at the root $t$.

Now suppose $H$ is a NO-instance.
Take any flow and consider two paths $\hat{Q}_i$ and $\hat{Q}_j$ in $\mathcal {G}$ for demands $i$ and $j$, where $i < j$.
These paths also induce two feasible paths $Q_i$ and $Q_j$ in the half-grid $G_N$.
By Claim \ref{cl:cross}, these paths cross on row $j$ of the half-grid (between columns $i$ and $j-1$).
In the directed case (for unsplittable or confluent flow) if they cross at a grid-node $v$, then the paths they induce in the copy of $H$ at $v$
must be node-disjoint. This is not possible in the directed case since
such paths do not exist for $(x_1,y_1)$ and $(x_2,y_2)$.

In the undirected confluent case, we must also have node-disjoint paths through this copy of $H$.
 As we are in row $j$ and a column between column $i$ and $j-1$,
we have $\b=\capp{j}$ and $\capp{i} \le \a \le \capp{j-1}$. Thus, demand $j$ can only use the $\b$-edges of $H$.
This contradicts the fact that $H$ is a NO-instance. For the undirected case of unsplittable flow the two
paths through $H$ need to be edge-disjoint, but now we obtain a contradiction as
our gadget was derived from the
capacitated edge-disjoint paths problem.

It follows that no such pair $\hat{Q}_i$ and $\hat{Q}_j$ can exist and, therefore, the confluent/unsplittable flow routes at
most one demand and, hence, routes a total demand of at most $1+\Delta$.
\qed

We then obtain our hardness result. \\

{{\noindent\bf Theorem~\ref{thm:hard}.} \itshape
Neither cardinality nor throughput can be approximated to
within a factor of $O(m^{\frac12-\epsilon})$, for any
$\epsilon > 0$, in the single-sink unsplittable and confluent flow problems.
This holds for  undirected and directed graphs even
when instances are restricted to have bottleneck value
$\frac{d_{max}}{u_{min}}=1 + \Delta$ where $\Delta  > 0$ is arbitrarily small.\\
}

\noindent{\bf Proof.} It follows that if we could
approximate the maximum (unsplittable) confluent flow problem in $\mathcal{G}$ to a
factor better than $N$, we could determine whether the optimal solution is
at least $N$ or at most $1+\Delta$. This in turn would allow us to determine whether $H$ is a YES-
or a NO-instance.

Note that $\mathcal{G}$ has $n=\Theta(pN^2)$ edges, where $p=|V(H)|$. If we
take $N = \Theta(p^{\frac12(\frac{1}{\epsilon}-1)})$, where $\epsilon>0$ is an (arbitrarily)
small constant, then $n=p^{\frac{1}{\epsilon}}$ and so
$N = \Theta(n^{\frac12 (1-\epsilon)})$.   A precise lower bound of $n^{.5 - \epsilon'}$ is obtained for
$\epsilon' > \epsilon$ sufficiently small, when $n$ is sufficiently large.
\qed


\subsubsection{Priority Flows and Congestion}
\label{sec:congestion}

We now show the hardness of priority flows. To do this,
we use the same half-grid construction, except we must replace the
capacities by priorities. This is achieved in a straight-forward manner:
priorities are defined by the magnitude of the original demands/capacities.
The larger the demand or capacity in the original instance, the higher its priority in the
new instance. (Given the priority ordering we may then assume all demands and
capacities are set to $1$.)
In this setting, observe that Claim~\ref{cl:cross} also applies for priority flows.
\begin{claim}
Consider  two feasible paths $Q_i$ and $Q_j$ for demands $i$ and $j$ in the priority flow problem.
If $i<j$, then the paths must cross on row $j$, between columns $i$ and $j-1$.
\end{claim}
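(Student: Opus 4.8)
The plan is to mimic the proof of Claim~\ref{cl:cross} verbatim, with the single observation that the only property of the original capacities used in that argument was the \emph{ordering} $\capp{1} < \capp{2} < \cdots < \capp{N}$ of demand sizes relative to row/column capacities, not their numerical values. In the priority reformulation the row and column indexed by $i$ receive priority class exactly mirroring this ordering (smaller index $=$ lower priority, since a larger original capacity became a higher priority), and the demand at $s_i$ is assigned the matching priority. Hence the statement ``demand $i$ is too large to use an edge in column $k$ for $k<i$'' translates directly into ``demand $i$ has priority $i$, but the edges in columns $1,\dots,i-1$ belong to strictly lower priority classes and so cannot be used by demand $i$.'' That is the one substitution that needs to be made; everything else in the argument is purely topological and carries over.

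Concretely, first I would note that any feasible routing path $Q_i$ for demand $i$ cannot enter columns $1,\dots,i-1$, because those columns consist of edges whose priority is lower than the priority of demand $i$; therefore $Q_i$ must traverse the entire $i$-th row in order to leave the region reachable from $s_i$. The same applies to $Q_j$, which must traverse all of row $j$. Since $j>i$, row $j$ separates $s_i$ from the sink $t$ in the half-grid (exactly as in Claim~\ref{cl:cross}), so $Q_i$ is forced to meet row $j$ somewhere. But every edge of row $j$ is already saturated by demand $j$ (the priority-$j$ row edges have unit capacity and $Q_j$ uses all of them), so $Q_i$ cannot run \emph{along} row $j$: it must cross it transversally. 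Finally, the crossing cannot occur in columns $1,\dots,i-1$ (forbidden to $Q_i$) nor at column $j$ (by the half-grid geometry, column $j$ rises only to row $j$), so it occurs on row $j$ strictly between columns $i$ and $j-1$.

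I do not expect any real obstacle here — the claim is essentially an immediate corollary of Claim~\ref{cl:cross}. The only point requiring a line of care is making the correspondence between ``too large to route'' and ``insufficient priority'' precise, i.e.\ checking that the priority classes were set up so that demand $i$ may use an edge of row/column $k$ if and only if $k \ge i$, which is exactly what the construction prescribes (the larger the original capacity, the higher the priority, and demand $i$ inherits the priority level of its row). With that dictionary in hand, the proof of Claim~\ref{cl:cross} applies word for word, and one may simply write ``The proof is identical to that of Claim~\ref{cl:cross}, replacing the phrase `demand $i$ is too large to use this edge' by `demand $i$ has lower priority than this edge'.''
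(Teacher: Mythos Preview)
Your proposal is correct and follows essentially the same route as the paper: the paper's proof is verbatim the proof of Claim~\ref{cl:cross} with the single substitution that ``demand $i$ is too large for edges in columns $1,\dots,i-1$'' becomes ``edges in columns $1,\dots,i-1$ do not have high enough priority for demand $i$,'' exactly as you describe. Your additional remark that the saturation argument on row $j$ now uses the unit capacities (rather than the tight $\capp{j}$ capacities) is a nice point of care that the paper leaves implicit.
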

\noindent{\bf Proof.}
Consider demand $i$ originating at $s_i$. This demand cannot
use any edge in columns $1$ to $i-1$ as they do not have high enough priority.
Consequently, any feasible path $Q_i$ for demand $i$ must include all unit capacity edges  of row $i$.
Similarly, $Q_j$ must contain all of row $j$. Row $j$ cuts off $s_i$ from the sink $t$, so
$Q_i$ must meet $Q_j$ on row $j$.
Demand $i$ cannot use an edge in row $j$ as demand $j$ is already using up all the capacity along that row.
Thus $Q_i$ crosses $Q_j$ at the point they meet. As above, this meeting cannot occur
in columns $1$ to $i-1$. Thus the crossing point must occur on some column between $i$ and $j-1$.
\qed

Repeating our previous arguments, we obtain the following hardness result for priority flows.
(Again, it applies to both throughput and cardinality objectives as they coincide for priority flows.)

{{\noindent\bf Corollary~\ref{cor:priority}.} \itshape
The maximum single-sink priority flow problem
cannot be approximated to within a factor of $m^{\frac12-\epsilon}$, for any $\epsilon > 0$, in planar
directed or undirected graphs. \qed
}

\vspace*{.2cm}
We close the section by establishing   Theorem~\ref{thm:strongcongestion}. Consider  grid instance  built from
 a YES instance of the 2 disjoint path problem. As before we may find a routing of all demands with  congestion at most $1$.
Otherwise, suppose that the grid graph is built from  a NO instance and consider a tree $T$ returned by a strong congestion algorithm.
As it is a strong algorithm, the demand in row $i$ must follow its canonical path horizontally to the right as far as it can.
As it is a confluent flow, all demands from rows $>i$ must accumulate at this rightmost node in row $i$.
Inductively this implies that the total load at the rightmost node in row 1 has load $>N$.
As before, for any $\epsilon > 0$ we may choose $N$ sufficiently large so that $N \geq n^{.5-\epsilon}$. Hence
we have a YES instance of 2 disjoint paths if and only if the output from a $n^{.5-\epsilon}$-approximate strong congestion algorithm
returns a solution with congestion $\leq N$.

\section{Logarithmic Hardness of Single-Sink Confluent Flow with the No-Bottleneck Assumption}
\label{sec:confwithNBA}

We now prove the logarithmic hardness of the confluent flow problem given
the no-bottleneck assumption.
A similar two-step plan is used as for Theorem~\ref{thm:hard} but the analysis is
more involved.

(1) We introduce a planar graph $G_N$ which has the same structure as our previous half-grid, except
that its edge weights are changed. As before we  have demands associated with the $s_i$'s, but we assume these demands
are tiny -- this  guarantees that the no-bottleneck assumption holds.
We thus refer to the demands located at an $s_i$ as the {\em  packets} from $s_i$.
We define $G_N$ to ensure that there is a
collection of pairwise crossing ``trees'' (to be defined) that can route
packets of total value equal to the harmonic number $H_N\approx \log N$.
On the other hand, any collection of pairwise non-crossing trees can route at
most one unit of packet demand.

(2) We then build a new network $\mathcal{G}$ by replacing each node of $G_N$ by
an instance of the $2$-disjoint paths problem. Again, this routing problem is chosen
because it induces the following properties. If it is a YES-instance, then we can find a routing that
corresponds to pairwise crossing trees. Hence we are able to route $H_N$ demand.
In contrast, if it is a NO-instance, then a maximum confluent flow on $\mathcal{G}$ is forced to route using a non-crossing
structure and this forces the total flow to be at most $1$.

It follows that an approximation algorithm with guarantee better than
logarithmic would allow us to distinguish between YES- and NO-instances of our
routing problem, giving a lower bound of $\Omega(\log N)$. We will see that
this bound is equal to $\Theta(\log^{1-\epsilon} n)$.


\subsection{{An Updated Half-Grid Graph.}}\ \\
Again we take the graph $G_N$ with $N$ rows (now numbered from
 bottom to top) and $N$ columns (now numbered from right to left).  All the edges
in the $i^{th}$ row and all the edges in the $i^{th}$ column have capacity
$\frac{1}{i}$.  The $i^{th}$ row extends as far as the $i^{th}$ column and vice
versa; thus, we obtain a half-grid similar to our earlier construction but with updated  weights.
 Then we add a sink $t$.
There is an edge of capacity $\frac{1}{i}$ to $t$ from the bottom node (called $t_i$) in column $i$.
Finally, at the leftmost node (called $s_i$) in row $i$ there is a collection of packets (``sub-demands'')
whose total weight is $\frac{1}{i}$. These packets are very small. In particular,
they are much smaller than $\frac{1}{n}$, so they satisfy the no-bottleneck assumption.
The complete construction is shown in Figure \ref{fig.grid2}.
In the directed setting, edges are oriented to the right and downwards.

\begin{figure}[h]
\begin{center}
\includegraphics[height=10cm]{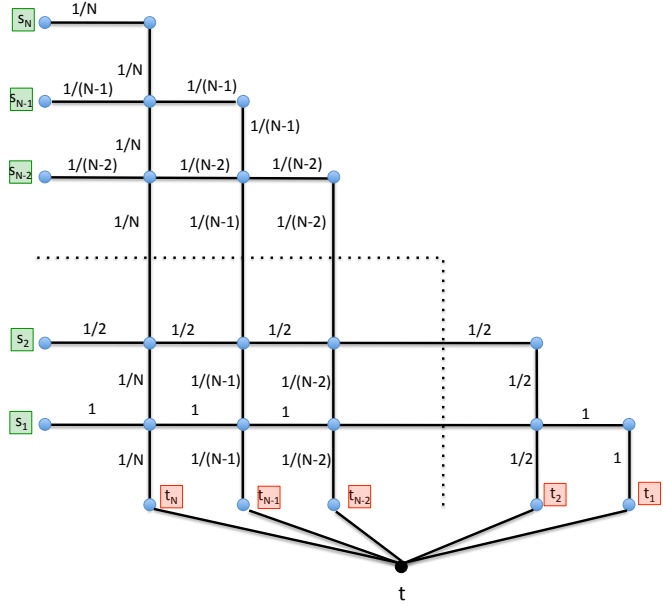}
\caption{\label{fig.grid2} An Updated \nba \ Half-Grid $G_N$.}
\end{center}
\end{figure}

Again, there is a unique $s\mbox{-}t$ path $P^*_i$ consisting only of edges of
weight $\frac{1}{i}$, that is, the hooked path that goes from $s_i$ along the
$i$th row and then down the $i^{th}$ column to $t$.
Moreover, for $i \neq j$, the path $P^*_i$ intersects $P^*_j$ precisely once.
If we route packets along the paths
$\mathcal{P}^*=\{P^*_1,P^*_2,\dots,P^*_N\}$, then we obtain a  flow of total value $H_N
=1+\frac12+\ldots \frac{1}{N}$.  Since every edge incident to $t$ is used in
$\mathcal{P}^*$ with its maximum capacity, this solution is a maximum single-sink flow.
Clearly, each $P^*_i$ is a tree, so this routing corresponds to our notion of routing on ``crossing trees''.

We then build $\mathcal{G}$ as before by replacing the degree four nodes in the grid by our
disjoint-paths gadgets. Our first objective is to analyze the maximum flow possible in the case
where our derived instance $\mathcal{G}$ is made from NO-instances. Consider a confluent flow in $\mathcal{G}$.
If we contract the pseudo-nodes, this results in some {\em leaf-to-root} paths in the subgraph $G_N$.
We define $T_i$ as the union of all such leaf-to-root paths terminating at $t_i$.
If we have a NO-instance, then  the resulting collection  $\mathcal{T}=\{T_1,T_2,\dots,T_N\}$ forms non-crossing subgraphs.
That is, if $i \neq j$, then there do not exist leaf-to-root paths $P_i \in T_i$ and $P_j \in T_j$
which cross in the standard embedding of $G_N$.
Since we started with a confluent flow in $\mathcal{G}$, the flow paths within each $T_i$ are {\em edge-confluent}.
That is, when two flow paths share an {\bf edge}, they must thereafter follow the same path to $t_i$.
Note that they could meet and diverge at a node if they
use different incoming and outgoing edges.
In the following,
we identify the subgraph $T_i$ with its edge-confluent flow.

The {\em capacity} of a $T_i$ is the maximum flow from its leaves
to $t_i$. The capacity of a collection $\mathcal{T}$ is then the sum of these capacities.
We first prove that the maximum value of a flow (i.e., capacity) is significantly reduced if we require a
non-crossing collection of edge-confluent flows.
One should note that as our demands are tiny, we may essentially route any partial
amount $x \leq \frac{1}{i}$ from a node $s_i$;
we cannot argue as though we route the whole $\frac{1}{i}$. On the other hand, any
packets from $s_i$ must route on the same path, and in particular $s_i$ lies in a unique  $T_j$ (or none at all).
Another subtlety in the proof is to handle the fact that we cannot apriori assume
that there is at most one leaf $s_j$ in a $T_i$.
 Hence such a flow does not just correspond to a maximum uncrossed unsplittable flow. In fact, because the packets are tiny,
 it is easy to verify that all the packets may  be routed unsplittably (not confluently) even if they are required to use non-crossing paths.

\begin{lemma}\label{lemma.maxflow}
The maximum capacity of a non-crossing edge-confluent flow  in $G_N$ is at most $2$.
\end{lemma}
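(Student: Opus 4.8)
The plan is to argue that in any non-crossing edge-confluent flow $\mathcal{T}=\{T_1,\dots,T_N\}$, only one of the row-sources $s_i$ can actually contribute flow, and that source contributes at most $\frac1i\le 1$, from which the bound of $2$ follows with room to spare. First I would recall Claim~\ref{cl:cross}: in the updated half-grid $G_N$, any feasible leaf-to-root path from $s_i$ must traverse \emph{all} of row $i$, because every edge in columns $1,\dots,i-1$ has capacity $\frac1{i'}<\frac1i$ for $i'<i$ and hence is too small to carry a packet-path rooted at $s_i$; and row $j$ (for $j>i$) separates $s_i$ from $t$, so any $s_i$-path must meet row $j$. The upshot is exactly as in Claim~\ref{cl:cross}: a feasible $s_i$-path and a feasible $s_j$-path with $i<j$ must \emph{cross} on row $j$ strictly between columns $i$ and $j-1$.

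Next I would exploit non-crossing. Suppose for contradiction that two distinct sources $s_i$ and $s_j$ ($i<j$) each send a positive amount of packet flow. Since each $s_k$ routes all its packets on a single path (the packets are tiny, but they still must be sent unsplittably through the gadget at each grid-node that sits on their common path), $s_i$ lies in a unique tree $T_a$ and $s_j$ in a unique tree $T_b$. The leaf-to-root path carrying $s_i$'s flow and the one carrying $s_j$'s flow are feasible $s_i$- and $s_j$-paths in $G_N$, so by the crossing statement above they cross on row $j$ — contradicting that $\mathcal{T}$ is a non-crossing collection (this is the defining property obtained from the NO-instance gadgets). Hence at most one source, say $s_{i_0}$, carries flow. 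Every packet from $s_{i_0}$ must route along a path that, as argued, uses all of row $i_0$ and then descends some column; since the $i_0$-th row edges each have capacity $\frac1{i_0}$, the total flow leaving $s_{i_0}$ is at most $\frac1{i_0}\le 1$. Therefore the capacity of $\mathcal{T}$ — the total leaf-to-root flow summed over all $T_i$ — is at most $1$, which is at most $2$.

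The main obstacle I anticipate is the bookkeeping around the subtlety flagged in the text: a single $T_i$ may contain more than one leaf $s_j$, and we are routing a confluent (indeed edge-confluent) flow rather than a maximum non-crossing unsplittable flow, so one cannot simply invoke the clean "only one demand routes" conclusion of Lemma~\ref{lem:yes}. The care needed is to phrase the crossing argument purely in terms of the leaf-to-root \emph{paths} actually carrying positive flow — each source contributes at least one such path, and any two such paths from distinct sources must cross — rather than in terms of the trees $T_i$ themselves; once stated that way, non-crossing of $\mathcal{T}$ immediately forbids two active sources. A secondary point to handle cleanly is why a packet-path from $s_i$ really must contain \emph{all} of row $i$ even though packets are tiny: capacity $\frac1{i'}$ for $i'<i$ is a hard upper bound on any single edge, independent of how small the packet is, so the column-$1$-through-$(i-1)$ edges are simply unusable by $s_i$, and the separation argument of Claim~\ref{cl:cross} does the rest. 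With these two points nailed down, the bound (and in fact the stronger bound of $1$) is immediate.
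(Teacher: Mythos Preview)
Your argument has a genuine gap at its core: the capacity-based crossing claim you invoke does \emph{not} hold in the updated half-grid of Section~\ref{sec:confwithNBA}. In that grid the packets located at each $s_i$ are explicitly chosen to be tiny (much smaller than $\frac{1}{n}$), so any single packet fits through every edge of $G_N$. Your assertion that ``the column-$1$-through-$(i-1)$ edges are simply unusable by $s_i$'' is therefore false: there is no capacity obstruction preventing a packet from $s_i$ from wandering through any column it likes. (You also have the inequality backwards: for $i'<i$ the capacity $\frac{1}{i'}$ is \emph{larger} than $\frac{1}{i}$, not smaller.) The paper flags exactly this pitfall just before the lemma, noting that because the packets are tiny one may route ``any partial amount $x\le\frac{1}{i}$ from a node $s_i$'' and, more decisively, that ``all the packets may be routed unsplittably (not confluently) even if they are required to use non-crossing paths.'' That sentence alone refutes your conclusion that at most one source can be active in a non-crossing routing.

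Consequently the lemma genuinely requires a different argument, and the paper supplies one. It does not try to show only one source routes; instead it allows many trees $T_{i_1},\dots,T_{i_k}$ (with $i_1>i_2>\cdots>i_k$) and controls their total flow via a nested-subgrid decomposition. Starting from $G(1,N)$, it peels off the leftmost-rooted tree, charges at most $\frac{2}{n_1}$ to it (one unit for the tree itself through its root edge, and one unit for any flow from other trees that squeezes through column $n_1$), and recurses in a strictly smaller subgrid $G(r_2,\ell_2)$. The nesting forces $r_1<r_2<\cdots<r_{k^*}\le\ell_{k^*}<\cdots<\ell_1$, so every term $\frac{1}{n_i}$ is at most $\frac{1}{k^*}$, and the sum $2\sum_i \frac{1}{n_i}$ is at most $2$. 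That is where the bound of $2$ (rather than $1$) actually comes from, and nothing in your proposal approximates this mechanism.
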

\noindent{\bf Proof.}
Let $t_{i_1},t_{i_2},...,t_{i_k}$ be the roots of the subgraphs $T_i$ which support
the edge-confluent flow, where wlog $i_1 > i_2 > \cdots >i_k$. We argue inductively about the
topology of where these supports live in $G_N$.
For $i \leq j$ we define a subgrid $G(i,j)$ of $G_N$ induced by columns and rows whose
indices lie in the range $[i,j]$.
For instance, the rightmost column of $G(i,j)$ has capacity $\frac{1}{i}$ and  the
leftmost column $\frac{1}{j}$; similarly, the lowest row of $G(i,j)$ has capacity $\frac{1}{i}$ and  the
highest row $\frac{1}{j}$.

Obviously all the $T_i$'s route in $G(1,N)=G(r_1,\ell_1)$ where we define $r_1=1,\ell_1=N$.
Consider the topologically highest path $P_{i_1}$ in $T_{i_1}$,
 and let $r'_1$ be the highest row number where this path intersects column $n_1=t_{i_1}$.
We define $r_2 = r_1'+1$ and $\ell_2 = n_1-1$ and consider the subgrid $G(r_2,\ell_2)$.
Observe that in the undirected case it is possible that $P_{i_1}$ routes through the
subgrid $G(r_2,\ell_2)$; see Figure \ref{fig:upperT}(b). In the directed case this cannot happen;
see Figure \ref{fig:upperT}(a).

\begin{figure}[h]%
    \centering
    \subfloat[]{\includegraphics[width=8cm]{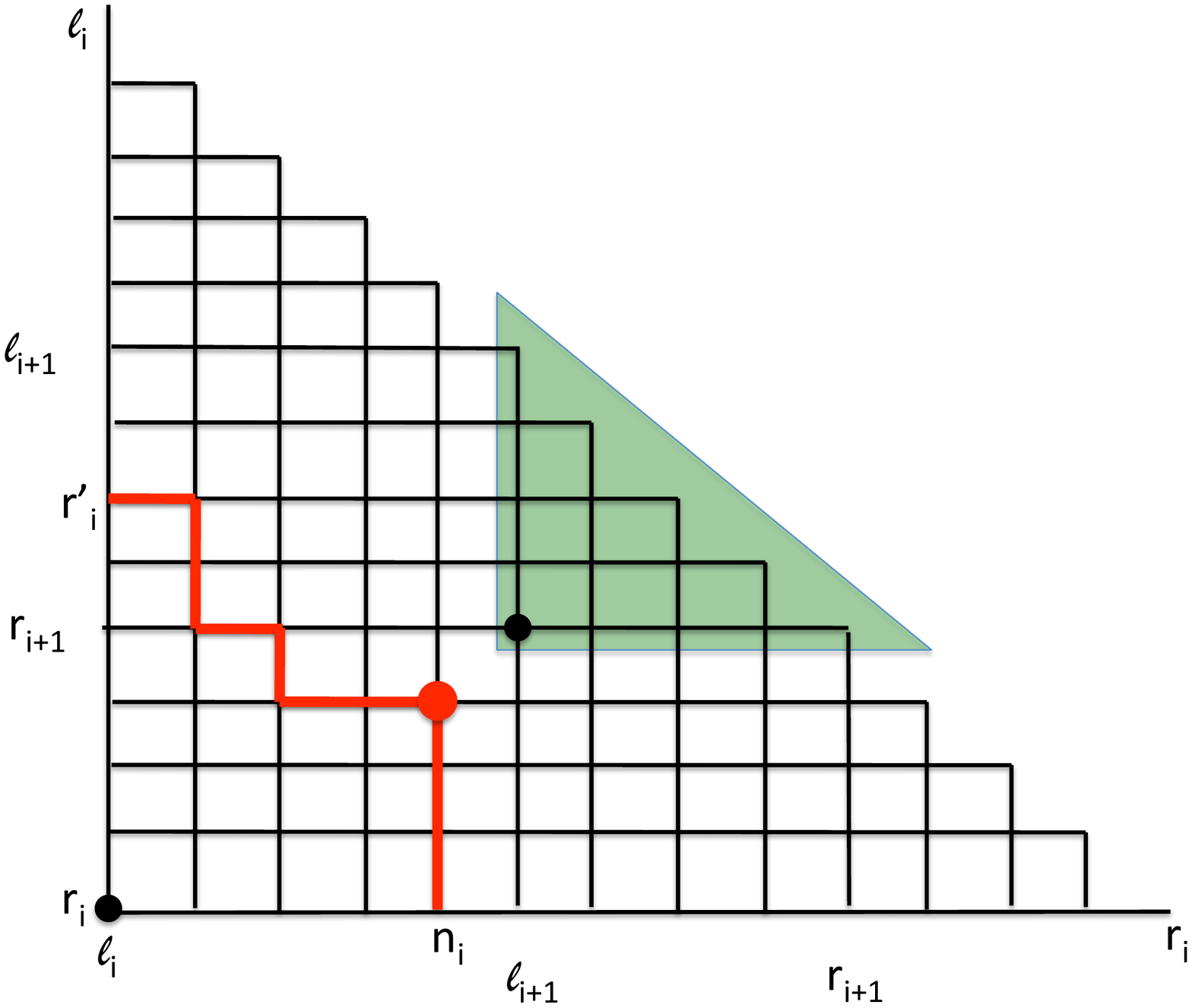} }%
    \qquad
    \subfloat[]{\includegraphics[width=8cm]{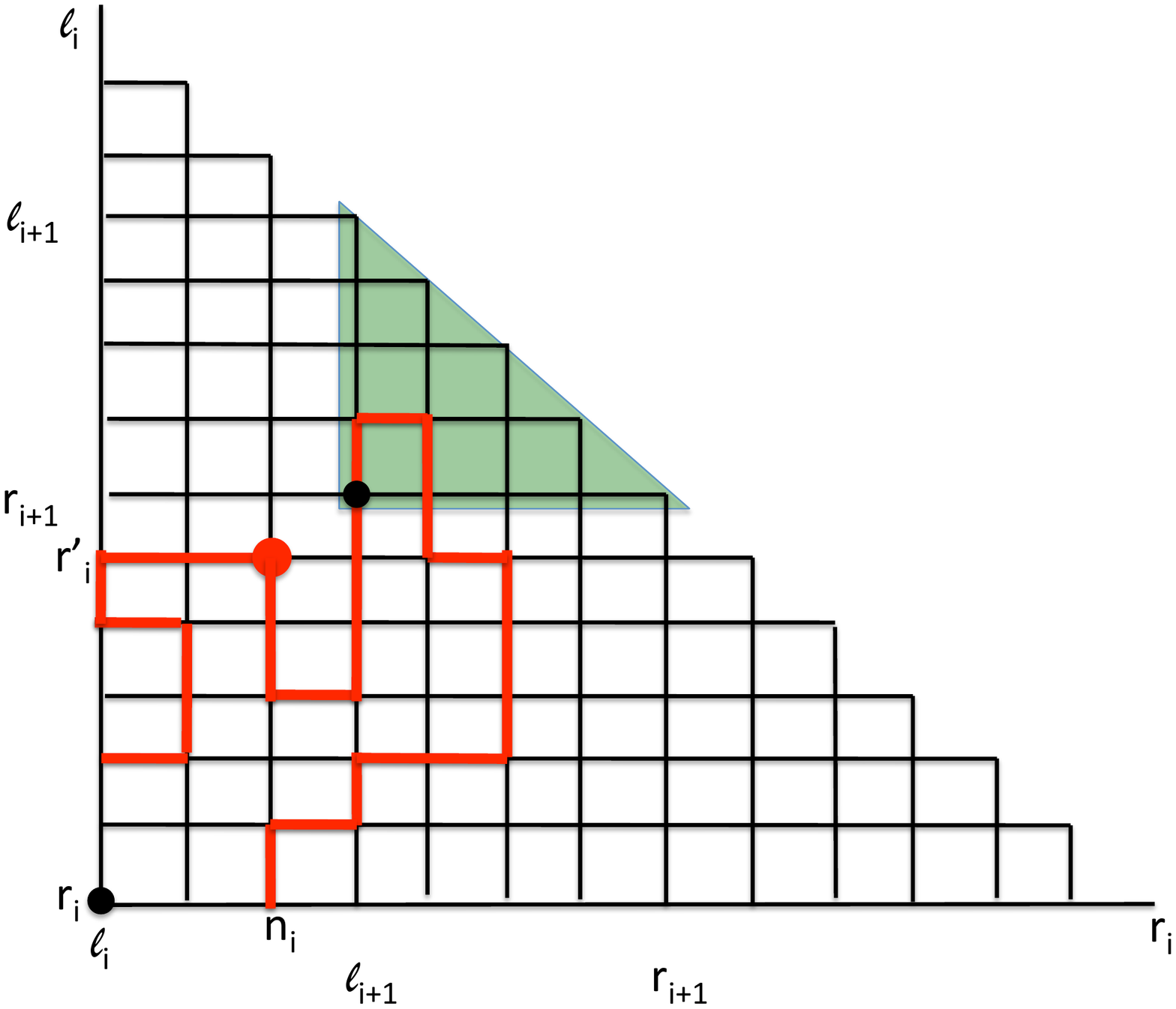}}%
    \caption{}%
    \label{fig:upperT}%
\end{figure}

In addition, it is possible that $T_{i_2}$ completely avoids routing through the subgrid $G(r_2,\ell_2)$.
But, for this to happen, it must have a cut-edge (containing all its flow) in column $t_{i_1}$;
consequently, its total flow is then at most $\frac{1}{i_1}$. It is also possible that it has some flow which
avoids $G(r_1,\ell_1)$ and some that does not.
Since $T_{i_1}$ also has maximum flow at most $\frac{1}{i_1}$, it follows that in
every case the total flow is at most $\frac{2}{i_1}$ plus the
maximum size of a confluent flow in the subproblem $G(r_2,\ell_2)$. Note that in this subproblem,
its ``first'' rooted subgraph may be at $t_{i_2}$ or $t_{i_3}$ depending on which of the two cases
above occurred for $T_{i_2}$.

If we iterate this process, at a general step $i$ we have a edge-confluent flow in the subgrid $G(r_i,\ell_i)$
whose lower-left corner is in row $r_i$ and column $\ell_i$ (hence $r_i \leq \ell_i$). Note that these triangular grids
are in fact nested.
Let $T_{n_i}$ be its subgraph rooted at a $t_{n_i}$ with $n_i$ maximized (that is, furthest to the left on bottom row).
As before, the total flow in this sub-instance is at most $\frac{2}{n_i}$ plus
a maximum edge-confluent flow in some $G(r_{i+1},\ell_{i+1})$. Since each new sub-instance
has at least one less rooted flow,
this process ends after at most $k^*\le k$ steps.  Note that for $i < k^*$ we have
$\frac{2}{n_i} \leq \frac{2}{\ell_{i+1}}$ and for $i=k^*$ we have $\frac{2}{n_{k^*}} \leq \frac{1}{r_{k^*}}$.
The latter inequality follows since for each $i$ we have $r_i \leq n_i \leq \ell_i$.

Now by construction we have the grids are nested and so
$\ell_1 > \ell_2 > \ldots \ell_{k^*} \geq r_{k^*} > \ldots r_2 > r_1$(recall that columns are ordered increasingly from
right to left). Since $r_1=1$, we may inductively deduce that
$r_{i} \geq i$ for all $i$. Thus $\ell_i \ge k^*$ for all $i$.
The total flow in our instance is then at most
\begin{eqnarray*}
2\cdot \sum_{1 \leq i \leq k^*}  \frac{1}{n_i} &\leq& 2\cdot (\sum_{2 \leq i \leq k^*} \frac{1}{\ell_i} + \frac{1}{r_{k^*}}) \\
&\le& 2\cdot \sum_{1 \leq i \leq k^*} \frac{1}{k^*} \\
&=& 2
\end{eqnarray*}
The lemma follows.
\qed

We can now complete the proof of the approximation hardness. Observe that any
node of degree four in $G_N$ is incident to two edges of weight
$\frac{1}{i}$ and to two edges of weight $\frac{1}{j}$, for some $j < i$. Again, we
construct a graph $\mathcal{G}$ by replacing each node of degree four with
an instance $H$ of the $2$ node-disjoint paths problem,
where the weight $\frac{1}{i}$
edges of $G_N$ are incident to $x_1$ and $y_1$, and the weight $\frac{1}{j}$
edges are incident to $x_2$ and $y_2$.
In the undirected case we require capacitated node-disjoint paths
and set $\alpha=\frac{1}{i}$ and $\beta=\frac{1}{j}$.
More precisely, since we are dealing with node capacities in confluent flows,
we actually subdivide each edge of $H$ and the new node
inherits the edge's capacity.
The nodes $x_1$ and $y_1$ also have capacity $\frac{1}{i}$ whilst the nodes
$x_2$ and $y_2$ have capacity $\frac{1}{j}$ in order to simulate the edge capacities of $G_N$.

\begin{lemma}\label{lem:yes2}
If $H$ is a YES-instance, then the maximum single-sink confluent flow in
$\mathcal{G}$ has value $H_N$.  If $H$ is a NO-instance, then the maximum
confluent flow in $\mathcal{G}$ has value at most $2$.
\end{lemma}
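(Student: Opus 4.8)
The plan is to mirror the proof of Lemma~\ref{lem:yes} from the small-demand regime, adapting it to the updated weights and to the notion of edge-confluent flows. First I would handle the YES-case: given a YES-instance $H$, each copy of $H$ placed at a degree-four grid node admits disjoint paths connecting $x_1$ to $y_1$ and $x_2$ to $y_2$, so we can route the packets from each $s_i$ along the canonical hooked path $P^*_i$. Since the $P^*_i$ pairwise intersect exactly once, at a grid node where one path ``turns'' and the other passes through, the two path-segments inside the corresponding copy of $H$ go between the $x_1,y_1$ pair and the $x_2,y_2$ pair respectively and are hence node-disjoint. Thus all canonical paths coexist and deliver $\frac1i$ packet-flow from each $s_i$, for a total of $H_N$; and since the canonical paths meet only at $t$ this routing is confluent. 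It also respects all (node) capacities by construction, since the capacities of the subdivided $H$-nodes were chosen to simulate the $\frac1i,\frac1j$ edge weights of $G_N$.

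For the NO-case, I would argue that any confluent flow in $\mathcal{G}$ projects, after contracting the pseudo-nodes, to a non-crossing collection $\mathcal{T}=\{T_1,\dots,T_N\}$ of edge-confluent flows in $G_N$, and that its value is then bounded by Lemma~\ref{lemma.maxflow}. The key point is to verify the non-crossing property: if two leaf-to-root paths $P_i\in T_i$ and $P_j\in T_j$ with $i\neq j$ crossed at a grid node $v$, then as in Claim~\ref{cl:cross}'s reasoning the crossing forces the two induced sub-paths inside the copy of $H$ at $v$ to be node-disjoint, with the ``higher-priority'' (larger capacity, i.e.\ smaller-index column/row) path $P_2$ restricted to the $\beta$-edges because of the capacity regime in that copy of $H$ — contradicting that $H$ is a NO-instance. (In the directed case the contradiction is immediate from nonexistence of directed node-disjoint paths.) Hence $\mathcal{T}$ is non-crossing, each $T_i$ is edge-confluent because the original flow in $\mathcal{G}$ is confluent, and Lemma~\ref{lemma.maxflow} bounds the total capacity — and so the routed demand — by $2$.

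The main obstacle I anticipate is the careful bookkeeping in the NO-case, specifically two subtleties flagged already in the text around Lemma~\ref{lemma.maxflow}: (i) a single $T_i$ may contain several leaves $s_j$, so one cannot simply identify $T_i$ with a single source-to-sink path and must work with the whole edge-confluent subgraph and its max-flow capacity; and (ii) because packets are tiny, a source $s_j$ may send only a fraction of its $\frac1j$ demand, and that fraction must all follow one path, so $s_j$ lies in a unique $T_{j'}$. I would make sure the projection/contraction argument and the application of Lemma~\ref{lemma.maxflow} are stated in terms of capacities of the $T_i$ rather than per-demand amounts, so that these subtleties are absorbed into the already-proven Lemma~\ref{lemma.maxflow}. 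With the non-crossing property established and capacities correctly simulated by the subdivided gadgets, the bound of $2$ in the NO-case and the value $H_N$ in the YES-case follow, completing the proof of Lemma~\ref{lem:yes2}.
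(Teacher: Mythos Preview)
Your proposal follows essentially the same approach as the paper: route along the canonical paths $P^*_i$ in the YES-case, and in the NO-case contract the gadgets to obtain a non-crossing edge-confluent flow in $G_N$ and then invoke Lemma~\ref{lemma.maxflow}. One small caveat: your appeal to ``the capacity regime'' forcing $P_2$ onto $\beta$-nodes is borrowed from the Section~\ref{sec:lower} argument, which relied on demand sizes being close to the capacities; here the packets are tiny, so that reasoning does not apply, and the node-disjointness of the two sub-paths inside the copy of $H$ comes purely from confluence (the distinct trees $T_i$ and $T_j$ are internally node-disjoint). The paper's own proof is in fact terser than yours on this point and simply asserts the non-crossing property without further justification.
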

\noindent{\bf Proof.}  It is clear that if $H$ is a YES-instance, then the two feasible paths
in $H$ can be used to allow paths in $G_N$ to cross at any node
without restrictions on their values. This means we obtain a confluent flow of value
$H_N$ by using the canonical paths $P^*_i, \, 1 \le i \le N$.

Now  suppose that $H$ is a NO-instance and
consider how a confluent
flow $\mathcal{T}=\{T_1,\dots, T_n\}$ routes packets through the gadgets.
As it is a NO-instance,
the image of the  trees (after contracting the $H$'s to single nodes) in $G_N$ yields a non-crossing edge-confluent flow.
The capacity of this collection in $G_N$ is at least that in $\mathcal{G}$.
By Lemma~\ref{lemma.maxflow}, their capacity is at most $2$, completing the proof.
\qed

\ \\
{{\noindent\bf Theorem~\ref{thm:hardnba}.} \itshape
Given the no-bottleneck assumption, the single-sink confluent flow problem
cannot be approximated to within a factor $O(\log^{1-\epsilon}n)$, for any $\epsilon > 0$, unless $P=NP$.
This holds for both the maximum cardinality and maximum throughput objectives
in undirected and directed graphs.\\
}

\noindent{\bf Proof.}  It follows that if we could
approximate the maximum confluent flow problem in $\mathcal{G}$ to a
factor better than $H_N/2$, we could determine whether the optimal solution is
$2$ or $H_N$. This in turn would allow us to determine whether $H$ is a YES-
or a NO-instance.

Note that $\mathcal{G}$ has $n=\Theta(pN^2)$ edges, where $p=|V(H)|$. If we
take $N = \Theta(p^{\frac12(\frac{1}{\epsilon}-1)})$, where $\epsilon>0$ is a
small constant, then $H_N=\Theta(\frac{1}{2}(\frac{1}{\epsilon}-1) \log p)$. For $p$ sufficiently large,
this is $\Omega((\log n)^{1-\epsilon}) = (\frac{1}{\epsilon} \log p )^{1-\epsilon}$.  This gives a
lower bound of $\Omega((\log n)^{1-\epsilon})$.  \qed

Similarly, if we are restricted to consider only flows that decompose into $k$
disjoint trees then it is not hard to see that:
\begin{thm}
Given the no-bottleneck assumption, there is a $\Omega(\log k)$ hardness of
approximation, unless $P=NP$, for the problem of finding a maximum confluent flow that
decomposes into at most $k$ disjoint trees. \qed
\end{thm}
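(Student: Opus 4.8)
The plan is to re-run the reduction behind Theorem~\ref{thm:hardnba} without letting $N$ grow: simply take $N=k$. Concretely, I would build the updated \nba{} half-grid $G_N$ of Section~\ref{sec:confwithNBA} (row $i$ and column $i$ have capacity $\frac1i$, the source $s_i$ carries a bundle of tiny packets of total weight $\frac1i$, and the sink $t$ is joined to each column-bottom $t_i$ by an edge of capacity $\frac1i$), and form $\mathcal G$ by replacing every degree-four grid node with a $2$-disjoint-paths gadget $H$ exactly as in that section (node-capacitated gadgets in the undirected case, plain directed gadgets in the directed case), with $\alpha=\frac1i$, $\beta=\frac1j$ as before. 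The packets are kept much smaller than $\frac1n$ (here $n=\Theta(pN^2)=\Theta(pk^2)$), so the no-bottleneck assumption holds throughout.

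The one new observation is that, because $G_N$ now has exactly $N=k$ edges entering $t$, every confluent flow in $\mathcal G$ automatically decomposes into at most $k$ vertex-disjoint subtrees; hence on $\mathcal G$ the ``at most $k$ trees'' problem coincides with ordinary maximum single-sink confluent flow, and everything else is inherited. If $H$ is a \textbf{YES}-instance, then as in Lemma~\ref{lem:yes2} we route the packet bundles along the canonical paths $P^*_1,\dots,P^*_k$, using the two disjoint paths in each gadget to let these paths cross freely at the grid nodes; since $P^*_i$ and $P^*_j$ meet only at $t$, this is a legal confluent flow of value $H_k=\sum_{i=1}^k\frac1i$, using exactly $k$ trees. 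If $H$ is a \textbf{NO}-instance, then (again as in Lemma~\ref{lem:yes2}) the image in $G_N$ of any confluent flow in $\mathcal G$ is a non-crossing edge-confluent flow, so Lemma~\ref{lemma.maxflow} bounds its value by $2$; crucially, Lemma~\ref{lemma.maxflow} makes no reference to the number of trees, so this bound survives unchanged.

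Putting these together, an algorithm for the $k$-tree-restricted single-sink confluent flow problem with approximation ratio strictly below $H_k/2=\Theta(\log k)$ would decide whether $H$ is a YES- or a NO-instance of the capacitated $2$-disjoint paths problem, which is NP-hard (Table~\ref{table:hardness}). Since $n$ is polynomial in the size $p$ of the $2$-disjoint-paths instance and in $k$, this is a polynomial-time reduction, giving $\Omega(\log k)$-hardness unless $P=NP$; the bound applies to both the cardinality and throughput objectives (identical up to the tiny-packet rescaling) and to directed and undirected graphs.

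There is essentially no obstacle here: all of the real work lives in Lemmas~\ref{lemma.maxflow} and~\ref{lem:yes2}, which are already established. The only two points that deserve an explicit sentence are (a) that forcing $N=k$ makes the $k$-tree restriction vacuous on $\mathcal G$, so the NO-case bound of Lemma~\ref{lemma.maxflow} still applies verbatim, and (b) that the YES-case routing attains value $H_k$ using only $k$ trees, so the restriction costs nothing on the positive side. One may also remark that the same bounds hold if one instead keeps a large host grid $G_N$ with $N\gg k$ and imposes the $\le k$-tree constraint on it: any $k$ subtrees use $k$ distinct edges into $t$, of capacities $\frac1{i_1},\dots,\frac1{i_k}$ with the $i_\ell$ distinct, so their total value is at most $H_k$, matched exactly by the canonical paths $P^*_1,\dots,P^*_k$.
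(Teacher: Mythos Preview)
Your proposal is correct and is exactly the argument the paper has in mind: the theorem is stated with an immediate \qed{} after the sentence ``Similarly, if we are restricted to consider only flows that decompose into $k$ disjoint trees then it is not hard to see that\ldots'', so no separate proof is written out. Setting $N=k$ in the half-grid of Section~\ref{sec:confwithNBA} (so that only $k$ edges enter $t$ and the $k$-tree restriction is automatically satisfied) and then invoking Lemmas~\ref{lemma.maxflow} and~\ref{lem:yes2} verbatim is precisely the intended one-line derivation; your alternate remark about keeping $N$ large and observing that any $k$ trees can carry at most $\sum_\ell 1/i_\ell\le H_k$ is also valid and arguably matches the paper's phrasing (``restricted to consider only flows that decompose into $k$ disjoint trees'') even more closely.
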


\section{Stronger Lower Bounds for Cardinality Single-Sink Unsplittable Flow with Arbitrary Demands}\label{sec:stronger}

In the large demand regime even stronger lower bounds can be achieved for the cardinality objective.
To see this, we explain the technique of Azar and Regev \cite{azar2001strongly} (used to prove Theorem~\ref{thm:ar})
in Section \ref{sec:expo-demands} and show how to extend it to undirected graphs and to confluent flows.
Then in Section \ref{sec:refine}, we combine their construction with the half-grid graph
to obtain lower bounds in terms of the bottleneck value (Theorem~\ref{thm:harder}).

\subsection{$m^{1-\epsilon}$ Hardness in the Large-Demand Regime}\label{sec:expo-demands}

{{\noindent\bf Theorem~\ref{thm:extended}.} \itshape
If $P \neq NP$ then, for any $\epsilon > 0$, there is no $O(m^{1-\epsilon})$-approximation algorithm
for the cardinality objective of the single-sink unsplittable/confluent flow problem in undirected graphs.\\
}

\noindent{\bf Proof.}
We begin by describing the construction of Azar and Regev for directed graphs.
They embed instances of the uncapacitated $2$-disjoint paths problem into a directed path.
Formally, we start with a directed path  $z^1,z^2, \ldots ,z^{\ell}$ where
 $t=z^{\ell}$  forms our sink destination for all demands. In addition, for each $i < \ell$, there are two parallel edges
 from $z^{i-1}$ to $z^{i}$. One of these has capacity $2^{i}$ and the other has a smaller capacity of $2^{i}-1$.
 There is a demand $s_i$ from each $z^i$, $i < \ell$ to $z^{\ell}$ of size $2^{i+1}$.
 Note that this unsplittable flow instance is feasible as follows. For each demand $s_j$, we may follow the high capacity edge
 from $z^j$ to $z^{j+1}$ (using up all of its capacity) and then use low capacity edges on the
 path $z^{j+1},z^{j+2}, \ldots ,z^{\ell}$. Call these the {\em canonical paths} for the demands.
The total demand on the low capacity edge from $z^j$ is then $\sum_{i \leq j} 2^i =2^{j+1}-1$, as desired.

Now replace each node $z^j$, $1\le j < \ell$, by an instance $H^j$ of the uncapacitated directed $2$-disjoint paths problem.
 Each edge in $H^j$ is given capacity $2^{j+1}$. Furthermore:\\
 (i) The tail of the high capacity edge out of $z^j$ is identified with the node $y_2$. \\
 (ii)  The tail of the low capacity edge out of $z^j$ is identified with $y_1$.\\
 (iii) The heads of both edges into $z^j$ (if they exist) are identified with $x_1$.\\
 (iv) The node $x_2$ becomes the starting point of the demand $s_j$ from $z^j$. \\
This construction is shown in Figure \ref{fig:line}.

\begin{figure}[h]
\begin{center}
\includegraphics[width=16cm,height=11cm]{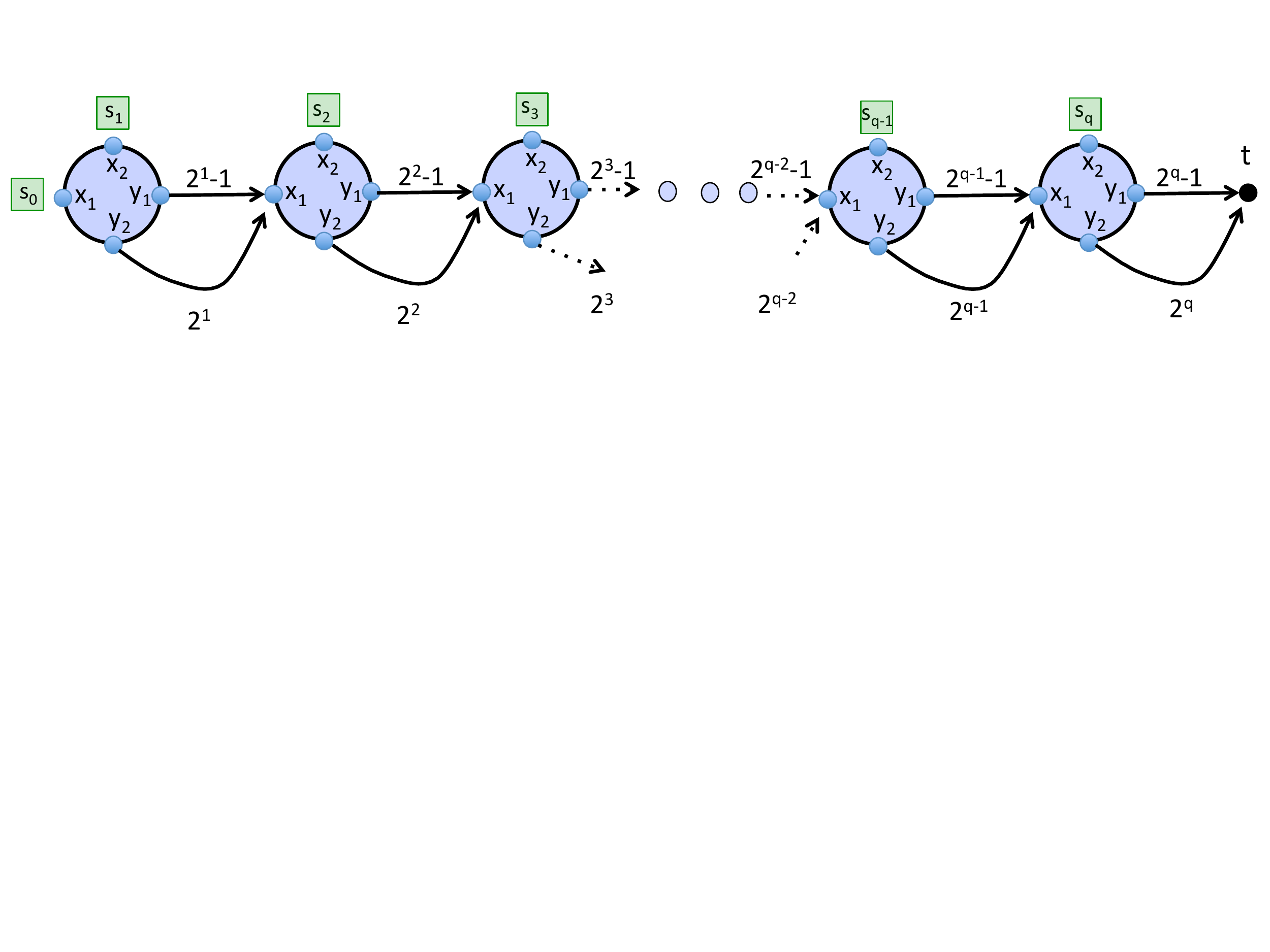}\\
\vspace{-6cm}
\caption{\label{fig:line} An Azar-Regev Path}
\end{center}
\end{figure}

Now if we have a YES-instance of the $2$-disjoint paths problem,
we may then simulate the canonical paths in the standard way. The demand in $H^j$ uses  the directed path
from $x_2$ to $y_2$ in $H^j$; it then follows the high capacity edge from $y_2$ to the $x_1$-node in the
 next instance $H^{j+1}$. All the total demand arriving from upstream $H^i$'s entered $H^{j}$ at its node $x_1$ and
 follows the directed path from $x_1$ to $y_1$. This total demand is at most $\sum_{i \leq j} 2^i$ and thus fits into the low
 capacity edge from $H^j$ into $H^{j+1}$.
Observe that this routing is also confluent in our modified instance because the paths in the $H^j$'s are node-disjoint.
 Hence, if we have a YES-instance
 of the $2$-disjoint paths problem, both the unsplittable and confluent flow problems have a solution routing all of the demands.

 Now suppose that we have a NO-instance, and consider a solution to the unsplittable (or confluent) flow problem.
 Take the highest value $i$ such that
 the demand from $H^i$ is routed. By construction, this demand must use a path $P_2$ from $x_2$ to $y_2$. But this
 saturates the high capacity edge from $y_2$. Hence  any demand from $H^j$, $j < i$ must pass from $y_1$ to $x_1$ while
 avoiding the edges of $P_2$. This is impossible, and so we route at most one demand.

 This gives a gap of $\ell$ for the cardinality objective. Azar-Regev then choose $\ell = |V(H)|^{\lceil \frac{1}{\epsilon} \rceil}$
to obtain a hardness of $\Omega(n^{1-\epsilon'})$.

Now consider undirected graphs. Here we use an
undirected instance of the capacitated $2$-disjoint paths problem.
We plug this instance into each $H^j$, and use the two capacity values of
$\b=2^{j+1}$ and $\a=2^{j+1}-1$.
A similar routing argument then gives the lower bound. \qed


We remark that it is easy to see why this approach does not succeed for the throughput objective.
The use of exponentially growing demands implies that high throughput is achieved simply
by routing the largest demand.

\subsection{Lower Bounds for Arbitrary Demands}\label{sec:refine}

By combining paths and half-grids we are able to refine the lower
bounds in terms of the bottleneck value (or demand spread).\\

{{\noindent\bf Theorem~\ref{thm:harder}.} \itshape
Consider any fixed $\epsilon > 0$ and  $d_{max}/u_{min} > 1$.
It is NP-hard to approximate cardinality single-sink unsplittable/confluent flow
to within a factor of $O(\sqrt{\log (\frac{d_{max}}{u_{min}})}\cdot m^{\frac12-\epsilon})$
in undirected or directed graphs. For unsplittable flow, this remains true for planar graphs.\\
}

\noindent{\bf Proof.}
We start with two parameters $p$ and $q$. We create $p$ copies of
the Azar-Regev path
 and attach them to a $p\times p$ half-grid, as shown in Figure~\ref{fig:gridline}.

\begin{figure}[h]
\begin{center}
\includegraphics[width=16cm,height=11cm]{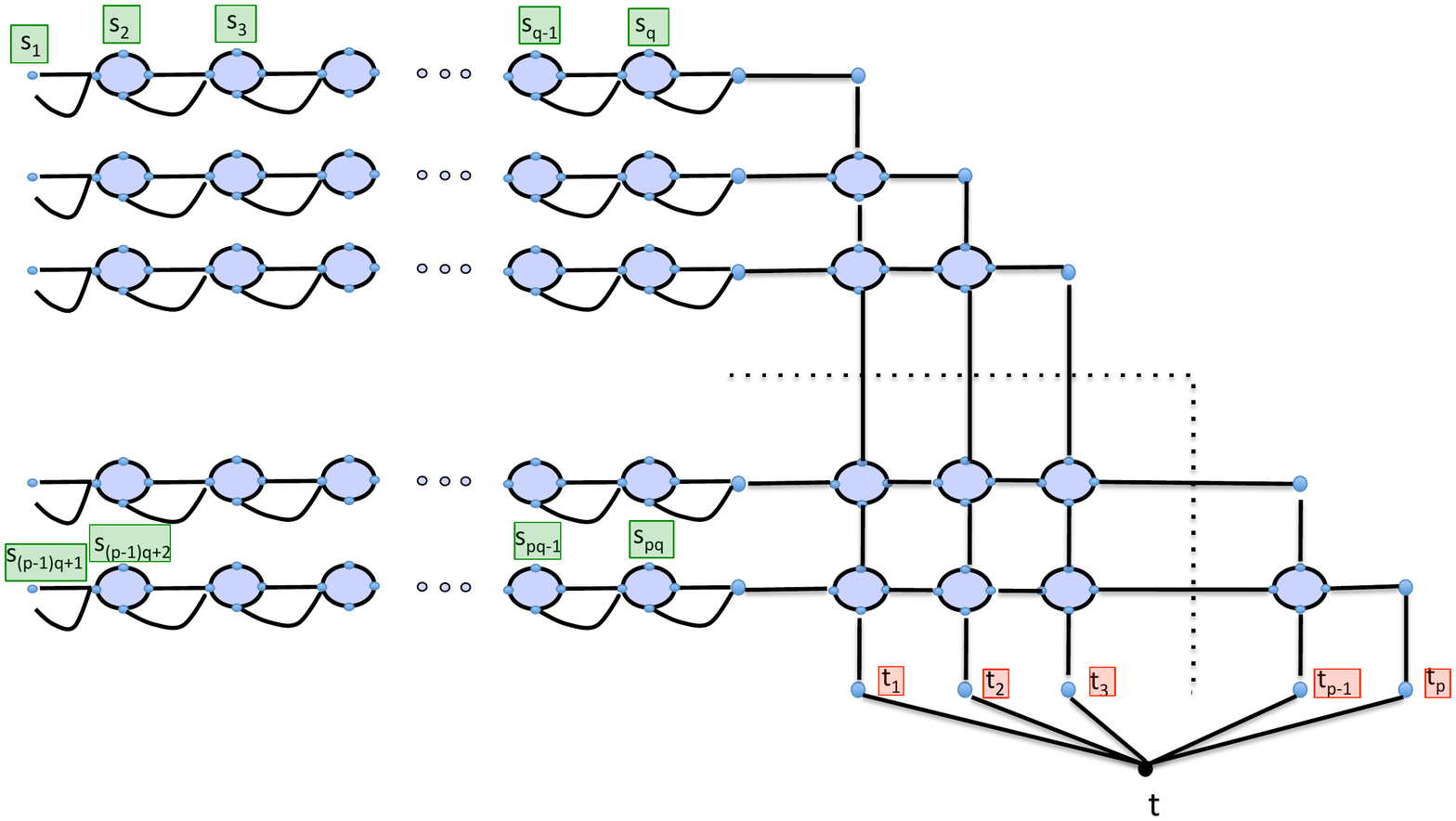}\\
\vspace{-2.5cm}
\caption{\label{fig:gridline}}
\end{center}
\end{figure}

 Now take the $i^{th}$ Azar-Regev path, for each $i=1,2 \ldots p$. The path
 contains $q$ supply nodes with demands of sizes $2^{(i-1)q},2^{(i-1)q+1},\ldots , 2^{iq-1}$. (Supply node $s_j$ has
 demand $2^{j-1}$.) Therefore the total demand on path $i$ is $\tau_i := 2^{(i-1)q}(2^q-1)<2^{iq}$. The key point is that
{\em the total demand of path $i$ is less than the smallest demand in path $i+1$}.
Note that we have $pq$ demands, and thus demand sizes from $2^{0}$ up to $2^{pq-1}$.
Consequently the demand spread is $2^{pq-1}$. We set $u_{\min}=d_{\min}$ and thus
$$pq-1 = \log\left(\frac{d_{max}}{d_{min}}\right) = \log\left(\frac{d_{max}}{u_{min}}\right)$$

 It remains to prescribe capacities to the edges of the half-grid.
To do this every edge in $i$th canonical hooked path has capacity $\tau_i$ (not $\capp{i}$). These capacity assignments,
in turn, induce corresponding capacities in each of the disjoint paths gadgets.
It follows that if the each gadget on the paths and half-grid correspond to a YES-instance gadget then we may
route all $pq$ demands.

Now suppose the gadgets correspond to a NO-instance. It follows that we may route at most one demand along each Azar-Regev path.
But, by our choice of demand values, any demand on the $i$th path is too large to fit into any column $j < i$ in the half-grid.
Hence we have the same conditions required in Theorem~\ref{thm:hard} to show that at most one demand in total can feasibly route.
It follows that we cannot approximate the cardinality objective to better than a factor $pq$.

Note that the construction contains at most $m=O((qp+ p^2)\cdot |E(H)|)$ edges, where $H$ is the size of the $2$-disjoint paths instance.
Now we select $p$ and $q$ such that $q\ge p$ and $pq\ge |E(H)|^{\frac{1}{\epsilon}}$.
Then, for some constant $C$, we have
\begin{eqnarray*}
C\cdot m^{.\frac12-\epsilon}\cdot \sqrt{\log(\frac{d_{max}}{d_{min}})} &=& m^{\frac12-\epsilon} \cdot \sqrt{\log(\frac{d_{max}}{d_{min}})} \\
&\le & \sqrt{pq}\cdot \sqrt{pq}\\
&=& pq
\end{eqnarray*}
Therefore, since we cannot approximate to within $pq$, we cannot
approximate the cardinality objective to better than a factor $O(\sqrt{\log (\frac{d_{max}}{u_{min}})}\cdot m^{\frac12-\epsilon})$. \qed

\section{Upper Bounds for Flows with Arbitrary Demands}\label{sec:upper}

In this section we present upper bounds for maximum flow problems with arbitrary demands.

\subsection{Unsplittable Flow with Arbitrary Demands}\label{sec:upper-unsplittable}

One natural approach for the cardinality unsplittable flow problem is used repeatedly in the
literature  (even for general multiflows). Group the demands
into at most $O(\log \frac{d_{max}}{d_{min}}) \geq \log (\frac{d_{max}}{u_{min}})$ bins, and then consider
each bin separately. This approach can also be applied to the throughput objective (and to the
more general profit-maximisation model).
This immediately incurs a lost factor relating to the number of bins, and this
feels slightly artificial. In fact, given the no-bottleneck assumption regime, there is no need to lose this extra factor:
Baveja et al~\cite{baveja2000approximation} gave
an $O(\sqrt{m})$ approximation for profit-maximisation when $d_{max} \leq u_{min}$.
On the other hand, our lower bound in Theorem~\ref{thm:harder} shows that if $d_{max} > u_{min}$  we do need to
lose some factor dependent on $d_{max}$. But how large does this need to be?
The current best upper bound is $O(\log (\frac{d_{max}}{u_{min}})\cdot \sqrt{m \log m})$ by Guruswami et al.~\cite{guruswami2003near},
and this works for the general profit-maximisation model.\footnote{Actually, they state the bound as $\log^{3/2}m$ because
exponential size demands are not considered in that paper.}
For the cardinality and throughput objectives, however, we can obtain a better upper bound.
The proof combines analyses from \cite{baveja2000approximation} and \cite{kolliopoulos2004approximating}
(which focus on the no-bottleneck assumption case). We emphasize that the following theorem
applies to all multiflow problems not just the single-sink case.\\



{{\noindent\bf Theorem~\ref{thm:upper}.} \itshape
There is an $O(\sqrt{m}\log (\frac{d_{max}}{u_{min}}))$ approximation algorithm
for cardinality unsplittable flow and an $O(\sqrt{m}\log n)$ approximation algorithm
for throughput unsplittable flow, in both directed and undirected graphs. \\
}

\noindent{\bf Proof.}
We apply a result from \cite{guruswami2003near} which shows that for cardinality
unsplittable flow, with $d_{max} \leq \Delta d_{min}$,
the greedy algorithm yields a $O(\Delta \sqrt{m})$ approximation.  Their proof is a technical extension
of the greedy analysis of Kolliopoulos and Stein \cite{kolliopoulos2004approximating}.
We first find an approximation for the sub-instance consisting of the demands at most $u_{min}$. This
satisfies the no-bottleneck assumption and an $O(\sqrt{m})$-approximation is known for general profits \cite{baveja2000approximation}.
Now, either this sub-instance gives half the optimal profits, or we focus on demands of at least $u_{min}$.
In the remaining demands, by losing a $\log (\frac{d_{max}}{u_{min}})$ factor, we may assume $d_{max} \leq \Delta d_{min}$, for some $\Delta=O(1)$.
The greedy algorithm above then gives the desired guarantee for the cardinality problem.
The same approach applies for the throughput objective, since all demands within the same bin
have values within a constant factor of each other. Moreover, we require only $\log n$ bins
as demands of  at most $\frac{d_{max}}{n}$ may be discarded as they are not necessary for
obtaining high throughput.
\qed




\vspace*{.3cm}
As alluded to earlier, this  upper bound  is not completely satisfactory as pointed out in \cite{chekuri2003edge}. Namely,
all of the lower bound instances have a linear number of edges $m=O(n)$.
Therefore, it is possible that there exist upper bounds dependent on $\sqrt{n}$.
Indeed, for the special case of \medp \ in undirected graphs and directed acyclic graphs
$O(\sqrt{n})$-approximations have been developed  \cite{chekuri2006n,nguyen2007disjoint}.
Such an upper bound is
not known for general directed \medp \ however;
the current best approximation is $\min\{\sqrt{m},n^{2/3}\}$.

\subsection{Priority Flow with Arbitrary Demands}\label{sec:upper-priority}
Next we show that the lower bound for the maximum priority flow problem is tight.
\begin{thm}
Consider an instance of the maximum priority flow problem with $k$ priority classes.
There is a polytime algorithm that approximates the maximum flow to within a factor of $O(\min\{k,\sqrt{m}\})$.
\end{thm}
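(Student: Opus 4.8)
The plan is to establish the two bounds $O(k)$ and $O(\sqrt m)$ separately, since the minimum of two correct approximation ratios is again a valid ratio. For the $O(k)$ bound, I would exploit the priority structure directly. Group the demands into the $k$ priority classes $D_1,\dots,D_k$, where $D_i$ consists of demands of priority $i$. A demand of priority $i$ may use only edges in $E_1\cup\dots\cup E_i$. For each fixed class $D_i$, restrict attention to the subgraph $G_i=(V, E_1\cup\dots\cup E_i)$; then routing the demands of $D_i$ (all of which are unit demands with a common source-sink structure targeting $t$) subject to the edge capacities is exactly a single-sink maximum edge-disjoint-paths / unit-demand unsplittable flow instance on $G_i$, which is solvable exactly in polynomial time by a max-flow computation (integral capacities, unit demands). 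Let $\opt$ denote the optimum of the whole instance, and let $\opt_i$ be the number of priority-$i$ demands routed in a fixed optimal solution; then $\sum_i \opt_i = \opt$, so some class $i^*$ has $\opt_{i^*}\ge \opt/k$. Solving the single-class problem for each $i$ exactly and returning the best of the $k$ solutions yields at least $\max_i (\text{exact optimum for } D_i) \ge \opt_{i^*} \ge \opt/k$. This gives the $O(k)$ guarantee.

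For the $O(\sqrt m)$ bound, the cleanest route is a reduction to unsplittable flow so that Theorem~\ref{thm:upper} (or the underlying greedy $O(\sqrt m)$ analysis for edge-disjoint-type instances) applies. One natural encoding: for each priority class $i$, make a private copy $G^{(i)}$ of the edge set $E_1\cup\dots\cup E_i$ with the original capacities — equivalently, replace the edge-priority constraint by forbidding edge reuse across classes by increasing $m$ by at most a factor $k$. That is lossy in the parameter $m$, so instead I would argue directly: run the standard greedy "shortest-augmenting / bounded-length" algorithm of Kolliopoulos–Stein as adapted in \cite{guruswami2003near} on the priority instance, where a demand's length is measured in $G_i$ for its own priority $i$. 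Since every demand is a unit demand (so $d_{max}/d_{min}=1$ and the no-bottleneck assumption holds trivially), the $O(\sqrt m)$ greedy bound goes through verbatim: the only feature of the graph the analysis uses is that short paths carry little congestion, and priorities only shrink the available edge set per demand, which can only help the packing argument. Concretely, process demands in a suitable order (e.g.\ by how "short" an available path they have), route a demand if its shortest feasible path in its own priority subgraph has length at most $\sqrt m$ and is currently uncongested, and charge conflicts as in \cite{kolliopoulos2004approximating}; demands with no short feasible path are handled by the complementary counting argument since a length-$\ge\sqrt m$ path uses $\ge\sqrt m$ edges.

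I would then conclude by taking the better of the two algorithms, giving the $O(\min\{k,\sqrt m\})$ ratio, and remark that this matches Corollary~\ref{cor:priority} up to the $\epsilon$ in the exponent (since $k$ can be as large as $\Theta(\sqrt m)$ in the hard instances). The main obstacle I anticipate is the $O(\sqrt m)$ half: one must be careful that the priority restriction — a demand seeing a different subgraph than another demand — does not break the amortized charging in the greedy analysis. The key observation that rescues it is monotonicity of the priority subgraphs, $G_1\subseteq G_2\subseteq\dots\subseteq G_k$, so that a routed high-priority (small-index) demand's path lies in every higher-index demand's subgraph, and the congestion bookkeeping can be done on the single ground set $E$ exactly as in the unsplittable case. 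The $O(k)$ half is routine once the per-class max-flow reduction is noted.
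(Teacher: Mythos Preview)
Your proposal is correct and matches the paper's approach: the $O(k)$ bound via a per-class max-flow computation is exactly what the paper does, and for the $O(\sqrt m)$ bound the paper likewise runs the shortest-feasible-path greedy algorithm and applies the Kolliopoulos--Stein style charging (their Lemma following the theorem). One small remark: your worry that monotonicity $G_1\subseteq\cdots\subseteq G_k$ is needed to ``rescue'' the charging is unnecessary---the paper's argument only uses that (i) an optimal path blocked first by $P_i$ is edge-disjoint from $P_1,\dots,P_{i-1}$ and hence was still feasible for its own demand when $P_i$ was chosen, and (ii) greedy compares shortest feasible paths across \emph{all} unrouted demands; this works for arbitrary per-demand edge subsets, nested or not.
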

\begin{proof}
 First suppose that $k \leq \sqrt{m}$. Then for each class $i$,
we may find the optimal priority flow by solving a maximum flow problem in the subgraph induced by all edges of
priority $i$ or better. This yields a $k$-approximation.
Next consider the case where $\sqrt{m} < k$. Then we may apply Lemma \ref{lem:greedy-priority}, below,
which implies that the greedy algorithm yields
a $O(\sqrt{m})$-approximation. The theorem follows.
\qed
\end{proof}

 The following proof for uncapacitated networks follows ideas from the greedy analysis of Kleinberg \cite{Kleinberg96},
 and Kolliopoulos and Stein \cite{kolliopoulos2004approximating}. One may also design an $O(\sqrt{m})$-approximation
 for general edge capacities using more intricate ideas from \cite{guruswami2003near}; we omit the details.
\begin{lemma}\label{lem:greedy-priority}
A greedy algorithm yields a $O(\sqrt{m})$-approximation to the  maximum priority flow problem.
\end{lemma}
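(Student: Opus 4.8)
The plan is to analyze the standard greedy algorithm that processes demands in a carefully chosen order and routes each one along a shortest available feasible path, where ``feasible'' for a priority-$i$ demand means using only edges of priority $i$ or better. The key structural fact about priority flows is monotonicity: the set of edges usable by a demand of priority $i$ contains the set usable by any demand of priority $i' > i$ (worse priority). I would first fix notation: let $\mathcal{G}$ be the greedy solution and $\mathcal{O}$ an optimal solution, and partition the optimal demands $\mathcal{O}$ into those whose greedy path is ``short'' (length at most $\sqrt{m}$) and those whose greedy path is ``long''. The short demands are handled by a direct counting argument on edge usage; the long ones are handled by showing that once greedy has routed many long paths, every edge of every priority class is heavily congested in a way that blocks the remaining optimal routes.

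First I would make the greedy order precise. A natural choice (following Kleinberg and Kolliopoulos--Stein) is to order demands so that at each step we route the demand whose current shortest feasible path (in the residual graph, respecting its priority) is shortest among all not-yet-considered demands; a demand is skipped permanently if no feasible path remains when it is reached. With this rule, if a demand $j \in \mathcal{O}$ was rejected by greedy, then at the moment greedy considered $j$ there was no feasible path for $j$, i.e.\ every path in $E_1 \cup \cdots \cup E_{p(j)}$ from $s_j$ to $t$ was blocked by previously-routed greedy demands. The crucial monotonicity observation is that a greedy demand $i$ routed \emph{before} $j$ and blocking $j$ must itself have priority at least as good as $p(j)$ is not required — any routed demand occupies edges, and $j$ only cares that its own priority-admissible subgraph is disconnected; but the edges occupied by a worse-priority greedy demand still lie in $j$'s admissible subgraph, so they count. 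This is why the argument goes through cleanly in the priority setting: congestion created by any earlier demand counts against a later one, regardless of relative priorities.

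Next I would run the two-bucket analysis. For the short optimal demands: each uses at most $\sqrt{m}$ edges in $\mathcal{O}$, and by a routing/flow-decomposition argument (or a direct charging of optimal paths to greedy paths through shared edges, exactly as in Kolliopoulos--Stein), the number of short optimal demands that greedy fails to capture is $O(\sqrt{m} \cdot |\mathcal{G}|)$, giving an $O(\sqrt{m})$ ratio for that bucket. For the long optimal demands: if greedy routed $\sqrt{m}$ or more demands each of length $\geq \sqrt{m}$, then the total edge-congestion generated is at least $\sqrt{m}\cdot\sqrt{m} = m$ in the uncapacitated (unit-capacity) network, so at least one edge is saturated; more carefully, a shortest-path argument shows that once greedy has $\sqrt{m}$ long paths in a given priority subgraph, the distance from any source to $t$ in that subgraph's residual exceeds $\sqrt{m}$, contradicting the assumption that some remaining optimal demand had a long (but finite) feasible path — one shows instead that at most $O(\sqrt{m})$ long optimal demands can be missed per priority class, and then that the total is still $O(\sqrt{m})\cdot|\mathcal{G}|$ after summing the charging correctly. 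Combining the two buckets yields $|\mathcal{O}| = O(\sqrt{m})\cdot|\mathcal{G}|$.

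The main obstacle, I expect, is making the ``long path'' case genuinely exploit priorities rather than just reproving the Kolliopoulos--Stein bound: one has to verify that the shortest-path potential argument is applied within the correct priority subgraph for each demand, and that a single global $\sqrt{m}$ threshold works simultaneously across all $k$ priority classes even though different demands live in different subgraphs. The resolution is that every priority subgraph is a subgraph of the whole graph on $\le m$ edges, so the same $\sqrt{m}$ threshold and the same congestion-counting bound apply uniformly; the edges a greedy demand occupies in a coarser (better-priority) subgraph also sit inside every finer subgraph containing its source's admissible edges, so the congestion charges accumulate monotonically down the priority order. Once that bookkeeping is set up, the calculation is routine and mirrors the cited greedy analyses, so I would cite \cite{Kleinberg96,kolliopoulos2004approximating} for the technical shortest-path estimates and only spell out the priority-specific monotonicity step in detail.
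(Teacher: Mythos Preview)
Your greedy algorithm and blocking charge are the same as the paper's, and your priority-specific observation --- that any greedy path sharing an edge with an optimal path $Q$ necessarily shares an edge lying in $Q$'s own admissible subgraph, so the block counts regardless of the two demands' relative priorities --- is exactly the only place priorities enter. The short-bucket step is also fine once made precise: if a missed optimal path $Q$ of length at most $\sqrt{m}$ is first blocked by greedy's $P_i$, then $l_i \le |Q| \le \sqrt{m}$ and $P_i$ blocks at most $l_i \le \sqrt{m}$ edge-disjoint optimal paths, so at most $t\sqrt{m}$ short optimal paths are missed.

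The long-bucket step, however, is a genuine gap. You reason about \emph{greedy} accumulating $\ge m$ total length from long paths and then about residual distances exceeding $\sqrt{m}$; neither statement bounds the number of missed long \emph{optimal} demands, and ``distance exceeds $\sqrt{m}$'' certainly does not contradict a long optimal path existing. Worse, your fallback ``at most $O(\sqrt{m})$ long optimal demands per priority class'' would multiply by $k$ and kill the lemma (which is invoked precisely when $k>\sqrt{m}$). The one-line fix is that the optimal paths are pairwise edge-disjoint over \emph{all} priority classes simultaneously and use at most $m$ edges in total, so there are at most $\sqrt{m}$ optimal paths of length exceeding $\sqrt{m}$, period --- no per-class accounting is needed. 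The paper avoids the bucket split altogether: with $k_i$ the number of optimal paths first blocked by $P_i$, edge-disjointness gives $k_i \le l_i$ and the greedy order gives each such blocked path length $\ge l_i$, so $\sum_i k_i^2 \le m$; then $(\sum_i k_i)^2 \le t\sum_i k_i^2 \le tm$ and $\sum_i k_i = |\mathcal{O}| - t$ finish the bound directly.
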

\begin{proof}
We now run the greedy algorithm as follows. On each iteration, we find the demand $s_i$ which has a shortest feasible path
 in the residual graph. Let $P_i$ be the associated path, and delete
its edges. Let the greedy solution have cardinality $t$.
 Let $\mathcal{O}$ be the optimal maximum priority flow and
let ${\cal Q}$ be those demands which are satisfied in some optimal
solution but not by the greedy algorithm.
We aim to upper bound the size of $\mathcal{Q}$.

Let $Q$ be a path used in the optimal solution satisfying some demand in ${\cal Q}$.
Consider any edge $e$ and the greedy path using it.
 We say that $P_i$
{\em blocks} an optimal path $Q$ if $i$ is the least index such
that $P_i$ and $Q$ share a common edge $e$.
Clearly such an $i$ exists or else we could still route on $Q$.

Let $l_i$ denote the length of  $P_i$.
Let $k_i$ denote the number of optimal paths  (corresponding to
demands in ${\cal Q}$ ) that are blocked by $P_i$. It follows that
$k_i \le l_i$. But, by the definition of the greedy algorithm,
we have that each such blocked path must have length at least $l_i$
 at the time when $P_i$ was packed.
Hence it used up at least $l_i  \ge k_i$ units of capacity in the optimal solution.
Therefore the total capacity used by the unsatisfied demands from the
optimal solution is at least $\sum_{i=1}^t k_i^2$.
As the total capacity is at most $m$ we obtain

\begin{equation}
\label{bounding}
\frac{(\sum_{i=1}^t k_i)^2}{t} \leq
\sum_{i=1}^t k_i^2 \leq m
\end{equation}

\noindent
where the first inequality is by the Chebyshev Sum Inequality.
Since $\sum_{i} k_i = |{\cal Q}| = |{\cal O}| -t $, we obtain
$\frac{(|\mathcal{O}|-t)^2}{t} \leq m$. One may verify that if $t < \frac{|\mathcal{O}|}{\sqrt{m}}$ then this inequality implies
$|\mathcal{O}| = O(\sqrt{m})$ and, so, routing a single demand yields the desired approximation.
\qed
\end{proof}

\section{Conclusion}
 It would be interesting to improve the upper bound in Theorem~\ref{thm:upper} to be in terms of $\sqrt{n}$
 rather than $\sqrt{m}$.
Resolving the discrepancy with Theorem~\ref{thm:harder}
between $\sqrt{\log(\frac{d_{max}}{u_{min}})}$ and $\log(\frac{d_{max}}{u_{min}})$ would also
clarify the complete picture.

\ \\
\noindent{\bf Acknowledgments.}
The authors thank Guyslain Naves for his careful reading and precise and helpful comments.
The authors gratefully acknowledge support from the NSERC Discovery Grant Program. 

\bibliography{conf}

\end{document}